\documentclass[11pt]{amsart}

\usepackage{amssymb}
\usepackage{amsmath}
\usepackage{amsfonts}
\usepackage{amsthm}
\usepackage{epsfig}
\usepackage{comment}
\usepackage{subcaption}
\usepackage[section]{placeins}
\usepackage{mathrsfs}
\usepackage{graphicx}
\usepackage[notrig]{physics}
\usepackage{units}

\usepackage{multirow}
\usepackage[title]{appendix}
\usepackage{xcolor}
\usepackage{textcomp}
\usepackage{manyfoot}
\usepackage{booktabs}
\usepackage{algorithm}
\usepackage{algorithmicx}
\usepackage{algpseudocode}
\usepackage{listings}
\usepackage{comment}

\newtheorem{thm}{Theorem}[section]

\theoremstyle{definition}
\newtheorem{definition}[thm]{Definition}
\newtheorem{example}[thm]{Example}

\theoremstyle{remark}

\begin{document}

\title{\textsc{Linear viscoelasticity: \\ Mechanics, analysis and approximation}}

\author{Michael Ortiz${}^{1,2}$}

\address
{
${}^{1}$Division of Engineering and Applied Science, California Institute of Technology, 1200 E.~California Blvd., Pasadena, CA 91125, USA.
\\
${}^{2}$Centre Internacional de Métodes Numerics en Enginyeria (CIMNE), Universitat Politècnica de Catalunya, Jordi Girona 1, 08034 Barcelona, Spain. 
}

\email{ortiz@caltech.edu}

\begin{abstract}
The aim of this review is to highlight the connection between well-established {\sl physical} and {\sl mathematical} principles as they pertain to the theory of linear viscoelasticity. We begin by examining the physical foundations of Boltzmann and Volterra's hereditary law formalism, and how those principles restrict the form of the hereditary law. We then turn to questions of material stability and continuous dependence on the stress history within the framework of the Lax-Milgram theorem, which we find to set forth rigorous and unequivocal conditions for the well-posedness of the linear viscoelastic problem. The outcome of this analysis is remarkable in that it gives precise meaning to fundamental physical properties such as fading memory. Finally, we turn to the question of best representation of viscoelastic materials by finite-rank hereditary operators or, equivalently, by a finite set of history or internal variables. We note that the theory of Hilbert-Schmidt operators and $N$-widths supplies the answer to the question.
\end{abstract}

\keywords{Linear viscoelasticity, materials with memory, existence theory, Hilbert-Schmidt operators, $N$-widths}

\maketitle

\begin{center}
{\sl Meo dilectissimo magistro Iacobo Lubliner dedicatum, gratissimo animo.}
\end{center}

\tableofcontents

\section{Introduction}

The development of linear viscoelasticity mirrors in many ways the development of theoretical and applied mechanics over the past century and a half, and exemplifies the strong interactions between mechanics, mathematics and experimental science that undergird those developments. The aim of this review is to precisely highlight the remarkable--and mutually reinforcing--connection between well-established {\sl physical} and {\sl mathematical} principles as they pertain to linear viscoelasticity. Historical accounts of the growth of the field may be found in \cite{markovitz1977a, Golden:1989}. 

Early attempts to characterize the time-dependent behavior of materials gave rise to the science of {\sl rheology}, including viscoelasticity. Weber \cite{weber1835a, weber1841a} noted that an instantaneous extension of silk threads under load was followed by a further extension over a long period of time. Upon removal of the load, the original length was partly recovered after sufficient passage of time. From these observations, he posited the phenomena of creep, stress relaxation and damping of vibrations. The German school \cite{meyer1874a, Boltzmann:1874, boltzmann1878a, voigt1890a, Wiechert:1893, Wiechert:1899} referred to this behavior as "Elastische Nachwirkung" or "the elastic aftereffect" while the British school \cite{thomson:1865a, maxwell1867a, love1892a} spoke of the "viscosity of solids". The term "viscoelasticity" is of more recent coinage, used, e.~g., by Alfrey \cite{alfrey1948a} in the context of polymers.

The rise of polymeric materials in the post-war years rekindled interest in viscoelasticity, with seminal contributions from Eirich \cite{eirich1956a}  Staverman and Schwarzl \cite{staverman1956a} and others. The works of Eugene Bingham \cite{bingham1922}, John Ferry \cite{ferry1961}, Ronald Rivlin \cite{Rivlin:1951, rivlin1955a, rivlin1965a} and co-workers contributed to a deeper understanding of the viscoelastic properties of polymers. These studies led, among other advances, to the establishment of time-temperature superposition principles, which aim to predict long-term behavior from short-term experiments. James G. Oldroyd \cite{oldroyd1950} introduced generalized viscoelastic models incorporating non-linear effects, paving the way for more sophisticated material descriptions \cite{schapery1969a, schapery1997a, schapery2000a}. Additionally, experimental advances, including Dynamic Mechanical Analysis (DMA) \cite{Menard:2002}, nanoindentation \cite{Herbert:2008, Herbert:2009}, Dynamic Shear Resting (DST) \cite{Arbogast:1998, Bayly:2008}, Fourier Transform Rheology \cite{Wilhelm:1998}, and others, enabled researchers to characterize viscoelastic properties with higher precision.

Early work aimed at mathematical modelling of viscoelastic behavior by means of {\sl differential} constitutive relations include that of Maxwell \cite{maxwell1867a} and Meyer \cite{meyer1874a}. Maxwell introduced a model to describe the stress relaxation behavior of materials using a combination of an elastic spring and a viscous dashpot. Voigt \cite{voigt1890a} proposed a parallel spring-dashpot model, known as the Kelvin-Voigt model, to describe creep behavior. This early work laid the foundation for the subsequent development of differential models of viscoelasticity \cite{gemant1936, Fluegge:1975, tschoegl1989}. 

In sharp contradistinction to the differential models, Boltzmann \cite{Boltzmann:1874}, for isotropic materials, and later Volterra \cite{Volterra:1909}, for general anisotropic materials, proposed a general {\sl hereditary-integral} form of the constitutive relations. The approach was further developed in the thirties by the Italian school of Mathematical Physics, exemplified by Dario Graffi \cite{Fabrizio2012}. In analysis, the hereditary integral description of viscoelasticity gradually gained in acceptance over the more specialized differential form \cite{Linz:1985}. 

In the 50s, the Rational Mechanics school embarked on an effort to place continuum mechanics in general, and viscoelasticity in particular, on a rigorous physical and mathematical foundation. These efforts led to the formulation of a general theory of {\sl materials with memory}. In a series of papers starting in 1957, Green and Rivlin  \cite{green1957a, green1959a, green1960a} proposed the use of hereditary constitutive laws for the description of non-linear viscoelastic materials, originally developed by Boltzmann and Volterra for the linear case, as an alternative to models using constitutive equations of the rate type \cite{rivlin1955a}. A multiple-integral constitutive equation, arranged in a series, was also developed by Pipkin and Rogers \cite{Pipkin:1968}. A linearized version of Green and Rivlin's theory was developed by Pipkin and Rivlin \cite{pipkin1961a, pipkin1964a, rivlin1965a}. 

In practice, the entire history of a body can never be known, and the interpretation of the results of an experiment can only be justified only if the history of the system in the distant past prior to the start of the experiment has no appreciable influence on its outcome \cite{Truesdell:1965, wang1965b, perzyna1967a, lubliner1969a}. There is no unique way to render this intuitive principle of {\sl fading memory} in mathematical terms. Loosely speaking, fading memory is achieved when the memory functional is continuous in the space of histories with respect to some "obliviating topology", that poses a less restrictive characterization of "closeness" for histories whose support lies in the distant past than for histories with support lying in the recent past. 

A number of different obliviating topologies were proposed as appropriate for the characterization of the fading memory property. Coleman and Noll \cite{coleman1960a, coleman1961a, coleman1961b, coleman1962a} sought to formalize the concept of fading memory in a series of seminal papers in the sixties, see also \cite{coleman1966a, lubliner1969a}. A generalization of Coleman and Noll's theories was proposed by Wang \cite{wang1965a}, who, in his first theory, introduced the concept of an obliviating measure. In his second theory, Wang \cite{wang1965b} proposed the use the topology of compact convergence, while Perzyna \cite{perzyna1967a} proposed the use of general metric topologies. This surfeit of mathematical formalisms evinces the need for mathematical analysis to be closely directed by physical principles. 

A final epochal change has been brought on by the advent and rise of {\sl Data Science}. The availability of big material data sets, made possible by advances in experimental and computational science (see, e.~g., \cite{Sutton:2009, Buljac:2018, Schleder:2019, Bernier:2020, Wang:2024, Li:2019, Jin:2022}), has given rise to a desire to forge a closer nexus between material data and the predictions they enable. Two main paradigms have emerged, loosely corresponding to supervised and unsupervised methods in machine learning: {\sl Model-free} approaches, in which material set data is combined directly with field equations to effect predictions of quantities of interest \cite{Eggersmann:2019, Salahshoor:2023}; and {\sl model-based} approaches, in which the connection between material data and predictions is effected through the intermediate step of identifying a material law from the data \cite{Liu:2022, Bhattacharya:2023, Liu:2023, Weinberg:2023, Asad:2023, Marino:2023, Ghane:2024, Akerson:2024}. 

In the context of the second paradigm, it has been long recognized that material identification from empirical data may be regarded as an inverse problem (see, e.~g., the pioneering work of Bui \cite{Bui:1993}). For the most part, the classical work is concerned mainly with the identification of parameters in a given class of models, e.~g., polynomial expansions \cite{Pipkin:1968, pipkin1961a, pipkin1964a, rivlin1965a} or Prony series \cite{Prony:1795, Qvale:2004, Knauss:2007, Zhao:2007}, in contrast to the more challenging problem of identifying the functional form of the hereditary law itself (see, e.~g., \cite{Martins:2018} and references therein). Neural networks and machine learning have supplied a new and efficient means of representing material laws and fitting them by regression to big data sets, causing an extensive reevaluation of the field \cite{Liu:2022, Bhattacharya:2023, Liu:2023, Asad:2023, Marino:2023, Ghane:2024, Akerson:2024}.

Whereas these representations are convenient and efficient in practice, they are based on an {\sl a priori} assumption of a particular parameterized form of the hereditary law, which begs the question of what is the {\sl best}, or {\sl optimal}, representation of a given viscoelastic material, or a class of viscoelastic materials, by finite-rank hereditary operators. This problem falls squarely within the theory of {\sl n-widths} \cite{Pinkus:1985}, and was solved by Schmidt as early as 1907 \cite{Schmidt:1907}, with further seminal contributions by such giants as A.~Kolmogoroff \cite{Kolmogoroff:1936}, I.~M.~Gel'fand \cite{Gelfand:1959}, V.~M.~Tikhomirov \cite{Tikhomirov:1960}, and others. The theory extends to the case in which the hereditary law is not known exactly but it is only known to belong to a certain class of hereditary laws, e.~g., as defined by an experimental data set. 

The appeal of the theory of $N$-widths is that it supplies subspaces of histories of given dimension resulting in the best possible approximation of a class of hereditary laws. We note that the approximation of hereditary laws by finite-rank operators is in fact equivalent to the formulation of viscoelastic models in terms of a finite number of {\sl history} or {\sl internal variables} (see discussion at the end of Section~\ref{cN43Ba}). The theory of $N$-widths thus also answers the question of what is the best choice of history or internal variables for purposes of representing a given class of linear viscoelastic materials, a problem lucidly formulated in \cite{Liu:2023, Bhattacharya:2023}. 

With the history of the subject just outlined by way of guidance, in this review we begin by examining the physical foundations of Boltzmann and Volterra's hereditary law formalism,  Section~\ref{rniljj}. Specifically, we seek answers to questions such as: Why is the hereditary law in convolution form? Why is the hereditary kernel symmetric and positive definite? How does (instantaneous) viscosity differ from (delayed) rheology? (creep and relaxation). Why can we neglect the distant past in formulating the hereditary law? These properties of the hereditary law are often postulated {\sl ad hoc} and taken unquestioningly as point of departure, hence the value of a brief review.  

Section~\ref{dGmq9v} addresses questions such as: What properties of the hereditary law ensure continuity with respect to small perturbations? What properties of the hereditary law ensure material stability? (coercivity). These considerations in turn beg the question of what is a suitable functional setting in which to understand 'continuity' and 'coercivity'. This quandary is most satisfactorily solved by examining the well-posedness of the local stress-controlled problem, i.~e., the problem of determining the strain evolution of a material point given its stress evolution, and appealing to the Lax-Milgram theorem, which sets forth rigorous and unequivocal conditions for existence and uniqueness of solutions. The outcome of this analysis is remarkable in that it gives precise meaning to the ubiquitous---but seldom defined---notion of fading memory. Thus, whereas, as mentioned above, fading memory can be formally described in terms 'obliviating measures', it is often unclear what restrictions must said measures abide by and to what avail. As it turns out, the entire point of fading memory is to ensure material stability in the sense of contractivity of the forward mapping, see Section~\ref{wCAnUL}. Conversely, without fading memory a rheological material under stress could undergo runaway deformations in time and eventually disintegrate. Therefore, it is not surprising, by simple {\sl reductio ad absurdum}, that the rheological materials encountered in practice tend to exhibit fading memory. 

Section~\ref{nD4ghN} is a brief interlude concerned with verifying that, under mild conditions on the regularity of the domain and the forcing, the well-posedness of the local problem in fact carries over to the global {\sl boundary-value problem} as well. Finally, in Section~\ref{LBIbps} we turn to the question of {\sl best representation} of a given viscoelastic material, or a class of viscoelastic materials, by finite-rank hereditary operators or, equivalently, by a finite set of history or internal variables. As already mentioned, the theory of Hilbert-Schmidt operators and $N$-widths, which are reviewed in that section, supplies the answer to the question. 

Faded into the background but not forgotten are important aspects of the field of viscoelasticity such as experimental science, nonlinear analysis and computational methods. These, we leave for another day. 

\section{Linear viscoelasticity} \label{rniljj}

The linear viscoelastic solid supplies a canonical example of a linear hereditary law. The axiomatic and empirical basis of linear viscoelasticity is well established and the subject of an extensive literature (e.~g., \cite{coleman1961b, Gurtin:1962, fisher1973a}), but may stand a brief review. Specifically, we seek to ascertain the most general class of linear hereditary laws consistent with fundamental physical principles such as causality, the dissipation inequality and reciprocity, as well as with frequently encountered symmetries and properties such as time-shift invariance and integrability of the kernel. 

We consider isothermal processes throughout and omit any and all dependences on temperature for simplicity of notation. Throughout this review, we abide by the notational conventions of \cite{Rudin:1991, Evans:1998, AdamsFournier:2003}.

\subsection{Local state and evolutions} 

By {\sl local strain and stress evolutions} we understand functions, denoted $\epsilon : \mathbb{R} \to E$ and $\sigma : \mathbb{R} \to {F}$, defined over the real time line $\mathbb{R}$ with values in a finite-dimensional linear space $E$ and its dual ${F} = {E}^*$, respectively. For $t \in \mathbb{R}$, $\epsilon(t) \in E$ and $\sigma(t) \in {F}$ then denote the strain and stress at time $t$ and $\sigma(t) \cdot \epsilon(t) \in \mathbb{R}$ denotes their duality pairing. We work throughout within the linearized kinematics framework. In this setting, the terms 'strain' and 'stress' refer generally to work-conjugate variables representing the local state of deformation and the local state of internal force, respectively. However, for definiteness we confine attention to three-dimensional linear viscoelasticity and identify $E = \mathbb{R}^{3\times 3}_{\rm sym}$, the linear space of $3\times 3$ symmetric matrices. 

\subsection{The principle of superposition and hereditary laws}

An axiomatic foundation for linear viscoelasticity can be built on the Boltzmann {\sl superposition principle} \cite{Boltzmann:1874}, which asserts: 
\begin{center}
{\sl\narrower "Der Einflu\ss~der zu verschiedenen Zeiten vorhandenen 
Deformationen superponiert." 
(Every loading step makes an independent contribution to the final state.)}
\end{center}
\noindent
Thus, suppose that the stress response to a {\sl step} strain evolution
\begin{equation}
    \epsilon(t)
    =
    \left\{
        \begin{array}{ll}
            0 , & t < 0 , \\
            \epsilon_0 , & t \geq 0 ,
        \end{array}
    \right.
\end{equation}
is well-defined and equals $\bar{\mathbb{E}}(s,t)\epsilon_0$, for every $\epsilon_0 \in {E}$, $s,t \in \mathbb{R}$ and some {\sl relaxation kernel} $\bar{\mathbb{E}} : \mathbb{R} \times \mathbb{R} \to L({E},{F})$. Consider a general smooth strain evolution $\epsilon$. Then, by the principle of superposition,  approximating $\epsilon$ by piecewise constant, or {\sl simple}, functions (see, e.~g., \cite[\S1.43]{AdamsFournier:2003}, also \cite{coleman1961b, Gurtin:1962}), and passing to the limit, the corresponding stress response is given by the {\sl hereditary law}
\begin{equation} \label{64VhBF}
    \sigma(t)
    =
    \int_{-\infty}^{+\infty}
        \bar{\mathbb{E}}(s,t) \, \dot{\epsilon}(s)
    \, ds ,
\end{equation}
where $(\,\dot{}\,)$ denotes time differentiation, provided that the integral is defined, which requires sufficient regularity of the relaxation kernel and the admissible strain evolutions. 

\subsection{Causality and time-shift invariance} \label{J7hLq7}

The principle of {\sl causality} requires that the stress $\sigma(t)$ at time $t$ that arises in response to a prescribed strain evolution $\epsilon$ depend only on the {\sl past history} of strain prior to $t$. Specifically, if $\epsilon'$ and $\epsilon''$ are two prescribed strain evolutions such that $\epsilon'(s)=\epsilon''(s)$ for $s\leq t$, then, necessarily, the corresponding stresses at $t$ are equal, i.~e., $\sigma'(t)=\sigma''(t)$. Pipkin \cite[\S 6.1]{Pipkin:1991} bluntly remarks that, in particular: 
\begin{center}
\sl\narrower
"Causality means that there is no output until there is an input." 
\end{center}

\noindent
Causality requires the relaxation kernel to have the property
\begin{equation} \label{8CH6g8}
    \bar{\mathbb{E}}(s,t) = 0 ,
    \quad
    s > t ,
\end{equation}
whereupon the hereditary integral (\ref{64VhBF}) takes on the causal form
\begin{equation} \label{aQnb6M}
    \sigma(t)
    =
    \int_{-\infty}^t
        \bar{\mathbb{E}}(s,t) \, \dot{\epsilon}(s)
    \, ds .
\end{equation}
Causality is a fundamental property of physical systems and, therefore, henceforth we assume the hereditary law to be of the form (\ref{aQnb6M}) throughout.

The hereditary law (\ref{aQnb6M}) is said to be {\sl time-shift invariant} if shifting in time the strain evolution results in an equal shift in the stress response, i.~e., 
\begin{equation} 
    \sigma(t-\tau)
    =
    \int_{-\infty}^{t-\tau}
        \bar{\mathbb{E}}(s,t) \, \dot{\epsilon}(s-\tau)
    \, ds .
\end{equation}
for all admissible strain evolutions $\epsilon$ and time shifts $\tau$. Equivalently, 
\begin{equation} 
    \sigma(t)
    =
    \int_{-\infty}^t
        \bar{\mathbb{E}}(s+\tau,t+\tau) \, \dot{\epsilon}(s)
    \, ds ,
\end{equation}
which requires
\begin{equation} \label{NqmKs7}
    \bar{\mathbb{E}}(s,t) = \mathbb{E}(t-s) ,
\end{equation}
for some reduced relaxation kernel $\mathbb{E} : \mathbb{R} \to L({E},{F})$. In this representation, the causality condition (\ref{8CH6g8}) requires that
\begin{equation} \label{23ZXeY}
    \mathbb{E}(t) = 0 ,
    \quad
    t < 0 ,
\end{equation}
whereupon the hereditary law (\ref{aQnb6M}) reduces to the form
\begin{equation} \label{K3xWEG}
    \sigma(t)
    =
    \int_{-\infty}^t
        \mathbb{E}(t-s) \, \dot{\epsilon}(s)
    \, ds .
\end{equation}
Evidently, the hereditary law (\ref{aQnb6M}) is more general than (\ref{K3xWEG}) in that the former is capable of describing materials whose properties evolve in time, e.~g., due to chemical reactions, irradiation, wear, aging, and other effects \cite{lubliner1966a, lubliner1966b, lubliner1966c}.  

\subsection{Instantaneous viscosity vs. delayed response}
\label{A9evXu}

Consider the strain evolution
\begin{equation}
    \epsilon(t)
    =
    \left\{
        \begin{array}{ll}
            0, & t \leq 0 , \\
            \beta \, t , & t \geq 0 , 
        \end{array}
    \right.
\end{equation}
for some $\beta \in E$. According to the hereditary law (\ref{K3xWEG}), the corresponding stress evolution for $t \geq 0$ is
\begin{equation} 
    \sigma(t)
    =
    \Big(
        \int_0^t
            \mathbb{E}(t-s) 
        \, ds 
    \Big) \,
    \beta .
\end{equation}
The material is said to have {\sl instantaneous viscosity} if the limit
\begin{equation} \label{6mcVVN}
    M
    =
    \lim_{t\to 0^+}
    \int_0^t
        \mathbb{E}(t-s) 
    \, ds 
\end{equation}
is nonzero, in which case the stress obeys the viscosity law
\begin{equation}
    \lim_{t\to 0^+} \sigma(t) \sim M \beta ,
\end{equation}
immediately after the onset of deformation. 

We note that, for the limit (\ref{6mcVVN}) to be nonzero $\mathbb{E}$ must have an atom at the origin. In particular, $\mathbb{E}$ must be a distribution \cite{coleman1966a}. Contrariwise, if the limit (\ref{6mcVVN}) vanishes, e.~g., if $\mathbb{E}(t)$ is integrable in a neighborhood of the origin, then the material does not exhibit instantaneous viscosity, only {\sl delayed response} in the form of {\sl relaxation} and {\sl creep}. A canonical example of a material with instantaneous viscosity is the Kelvin-Voigt solid, whereas a canonical example of a material with no instantaneous viscosity is the Maxwell model \cite{maxwell1867a} or the standard-linear solid (see, e.~g., \cite{Fluegge:1975}). 

\subsection{The dissipation inequality}

The {\sl dissipation inequality} for general materials follows from the Clausius-Duhem inequality and requires that the sum of the dissipation due to the viscous stress, heat flux and internal processes be positive \cite{Coleman:1963, Truesdell:1965, Coleman:1967, lubliner1972a}. Assume that the material does not exhibit instantaneous viscosity in the sense of \S\ref{A9evXu}. Assume in addition that stresses and heat flow are uncoupled. Then, the dissipation inequality reduces to the Clausius-Planck inequality, which requires that the internal dissipation itself, i.~e., the dissipation due to the internal processes, be positive, i.~e.,
\begin{equation} \label{UYnW6J}
    - 
    \dot{\psi}(t) 
    + 
    \sigma(t) \cdot \dot{\epsilon}(t) 
    - 
    \eta(t) \cdot \dot{\theta}(t)
    \geq
    0 ,
\end{equation}
where $\psi$ is the free energy density, $\eta$ is the entropy density and $\theta$ is the absolute temperature. For the isothermal processes under consideration, the internal dissipation inequality (\ref{UYnW6J}) further reduces to
\begin{equation} \label{uXcEf6}
    - 
    \dot{\psi}(t) 
    + 
    \sigma(t) \cdot \dot{\epsilon}(t) 
    \geq
    0 .
\end{equation}
Consider now a smooth strain evolution $\epsilon(t)$ taking place within a bounded time interval $[a,b]$. Thus, the evolution defines a closed cycle of deformation with $\epsilon(a) = \epsilon(b) = 0$. Integrating (\ref{uXcEf6}) over the cycle gives
\begin{equation} \label{tGpq9j}
    \oint
        \Big(
            - 
            \dot{\psi}(t) 
            + 
            \sigma(t) \cdot \dot{\epsilon}(t) 
        \Big)
    \,dt
    \geq
    0 .
\end{equation}
But the free energy $\psi$ is a function of state, and hence it drops out from the integral (\ref{tGpq9j}), with the result
\begin{equation}
    \int_{-\infty}^{+\infty}
        \sigma(t) \cdot \dot{\epsilon}(t)
    \, dt
    \geq
    0 , 
\end{equation}
or, inserting the hereditary law (\ref{K3xWEG})
\begin{equation} 
    \int_{-\infty}^{+\infty}
        \Big( 
            \int_{-\infty}^t
                \mathbb{E}(t-s) \, \dot{\epsilon}(s)
            \, ds 
        \Big) \cdot
        \dot{\epsilon}(t) 
    \, dt 
    \geq
    0 .
\end{equation}
An application of the Fourier transform using the Parseval and convolution theorems (see, e.~g., \cite[Chapter 7]{Rudin:1991}) then gives
\begin{equation} \label{Tqr8xb}
    \frac{1}{2\pi}
    \int_{-\infty}^{+\infty}
        \omega^2 \,
        \hat{\mathbb{E}}(\omega) \,
        \hat{\epsilon}(\omega) \cdot \hat{\epsilon}^*(\omega)
    \, d\omega
    \geq
    0 ,
\end{equation}
where $(\,\hat{}\,)$ denotes Fourier transform. Demanding that (\ref{Tqr8xb}) hold for every closed deformation cycle in turn requires that $\hat{\mathbb{E}}(\omega)$ be positive definite, i.~e., 
\begin{equation} \label{DrPh8H}
    \hat{\mathbb{E}}(\omega) \, z \cdot z^* \geq 0, 
    \quad \forall z\in\mathbb{C};
    \quad \text{or} \;\;
    \hat{\mathbb{E}}(\omega) \geq 0 ,
\end{equation}
for all $\omega \in \mathbb{R}$, an inequality attributed to D.~Graffi~\cite{Graffi:1928}, see also \cite{Gurtin:1965, Fabrizio:1988, Matarazzo:2001}.

\subsection{Onsager's reciprocity principle}

Rogers and Pipkin \cite{Rogers:1963}, building on thermodynamic theories of viscoelasticity developed by Biot \cite{Biot:1954, Biot:1958}, Staverman and Schwarzl \cite{Staverman:1952, Staverman:1954} and Meixner \cite{Meixner:1953, Meixner:1954}, showed that the Onsager Reciprocity Principle \cite{Groot:1952, Coleman:1960} implies the symmetry of the relaxation kernel and suggested experimental tests to verify that property. 

To make this connection, we note that, in the Fourier representation, the hereditary law (\ref{K3xWEG}) takes the form
\begin{equation} \label{7TjTVX}
    \hat{\sigma}(\omega) = \hat{\mathbb{E}}(\omega) \hat{\beta}(\omega) ,
\end{equation}
with $\beta(t) = \dot{\epsilon}(t)$. This relation has the typical structure of a (linear) {\sl kinetic relation} between a thermodynamic flux, $\hat{\sigma}(\omega)$, and a thermodynamic force $\hat{\beta}(\omega)$, with $\hat{\mathbb{E}}(\omega)$ the phenomenological modulus relating the two \cite{Eringen:1960}. In its most abstract form, the essence of Onsager's principle is:
\begin{center}
{\sl\narrower "The kinetic relations derive from a potential."\\ 
(The second law then requires the potential to be convex.)}
\end{center}
\noindent
Evidently, the relation (\ref{7TjTVX}) derives from a kinetic potential if and only if 
\begin{equation} \label{BDEd4t}
    \hat{\mathbb{E}}^\dagger(\omega) = \hat{\mathbb{E}}(\omega) ,
\end{equation}
in which case the potential is the functional
\begin{equation}
    \Psi(\hat{\beta})
    =
    \frac{1}{2\pi}
    \int_{-\infty}^{+\infty}
        \frac{1}{2} 
        \hat{\mathbb{E}}(\omega) \,
        \hat{\beta}(\omega) \cdot \hat{\beta}^*(\omega)
    \, d\omega .
\end{equation}
As we have seen, the second law additionally requires that $\hat{\mathbb{E}}(\omega)$ be positive definite, eq.~(\ref{DrPh8H}), which renders the functional $\Psi(\hat{\beta})$ convex. We note that, by inverting the Fourier transform, the symmetry property (\ref{BDEd4t}), expressed in the frequency domain, extends to the relaxation kernel $\mathbb{E}(t)$, i.~e., it is equivalent to the symmetry property
\begin{equation} \label{JXm2jj}
    \mathbb{E}^T(t) = \mathbb{E}(t) ,
\end{equation}
in the time domain. 

\subsection{The fundamental theorem of calculus and fading memory}
Suppose that, for fixed $t$, we can write
\begin{equation}
    \bar{\mathbb{E}}(s,t) \, \dot{\epsilon}(s)
    =
    \frac{\partial}{\partial s} \Big( \bar{\mathbb{E}}(s,t) \, \epsilon(s) \Big)
    -
    \frac{\partial \bar{\mathbb{E}}}{\partial s} (s,t) \, \epsilon(s) ,
\end{equation}
with both terms integrable, and 
\begin{equation} \label{LT8uyf}
    \int_{-\infty}^t
        \frac{\partial }{\partial s} \Big( \bar{\mathbb{E}}(s,t) \, \epsilon(s) \Big)
    \, ds 
    =
    \bar{\mathbb{E}}(t,t) \, \epsilon(t) ,
\end{equation}
in the spirit of the fundamental theorem of calculus. Then, (\ref{aQnb6M}) can alternatively be written in the form
\begin{equation} \label{guT2D8}
    \sigma(t)
    =
    \bar{\mathbb{C}}(t) \, \epsilon(t)
    -
    \int_{-\infty}^t
        \bar{\mathbb{K}}(s,t)
         \, \epsilon(s)
    \, ds .
\end{equation}
where
\begin{equation} \label{TXb8Qb}
    \bar{\mathbb{C}}(t) = \bar{\mathbb{E}}(t,t) ,
    \quad
    \bar{\mathbb{K}}(s,t)
    =
    \frac{\partial\bar{\mathbb{E}}}{\partial s} (s,t) ,
\end{equation}
are the {\sl elastic moduli} and the {\sl hereditary kernel} at time $t$. In this representation, causality of the hereditary law (\ref{guT2D8}) requires
\begin{equation} \label{n3wyqZ}
    \bar{\mathbb{K}}(s,t) = 0 ,
    \quad
    s > t .
\end{equation}
We note that the form (\ref{guT2D8}) of the hereditary law is more general, or weaker, than the form (\ref{64VhBF}). Thus, given sufficient regularity of the hereditary kernel, e.~g., $\bar{\mathbb{K}}(s,t)$ essentially bounded in $s$ for a.~e.~$t$, the latter makes sense for absolutely continuous strain histories only, whereas the former makes sense for general integrable strain histories, including jumps.

We note that implied in (\ref{LT8uyf}) is an assumption that the material {\sl forgets} what happened at $t = -\infty$, which is a weak form of fading memory. Evidently, by the {\sl fundamental theorem of calculus} (\ref{LT8uyf}) is satisfied, e.~g., if there is $a > -\infty$ such that $\epsilon(t) = 0$ for $t < a$. More generally (\ref{LT8uyf}) is satisfied if the integrand is dominated by a continuous function vanishing at $-\infty$ \cite[Lemma 2.1]{Dafermos:1970}. In reference to these considerations, Fichera \cite{fichera1979b} suggestively entitled his seminal 1979 paper dedicated to the 60th birthday of Clifford A.~Truesdell: 
\begin{center}
\sl\narrower
"Avere Una Memoria Tenace Crea Gravi Problemi" \\ (having a lasting memory creates serious problems).
\end{center}
The paper, and references therein, can be consulted for lucid discussions. 

Proceeding as in \S\ref{J7hLq7}, we find that time-shift invariance of (\ref{guT2D8}) requires that
\begin{equation} \label{qgCeEo}
    \bar{\mathbb{C}}(t) = \mathbb{C} ,
    \quad
    \bar{\mathbb{K}}(s,t) = \mathbb{K}(t-s) ,
\end{equation}
for some constant elastic moduli $\mathbb{C}$ and hereditary kernel $\mathbb{K} : \mathbb{R} \to L({E},{F})$. From (\ref{TXb8Qb}), we additionally have
\begin{equation} 
    \mathbb{C} = \mathbb{E}(0) ,
    \quad
    \mathbb{K}(t)
    =
    \dot{\mathbb{E}}(t) .
\end{equation}
In this representation, the causality condition (\ref{n3wyqZ}) then requires that
\begin{equation} \label{l3S3XT}
    \mathbb{K}(t) = 0 ,
    \quad
    t < 0 ,
\end{equation}
and the reciprocity condition (\ref{JXm2jj}), the dissipation inequality (\ref{DrPh8H}) and the identities (\ref{TXb8Qb}) give
\begin{subequations} \label{R4H4Kc} 
\begin{align}
    &
    \mathbb{C}^T = \mathbb{C} ,
    \quad
    \mathbb{C} \geq 0 ,
    \\ &
    \mathbb{K}^T(\tau) = \mathbb{K}(\tau) ,
    \quad
    \mathbb{K}(\tau) \geq 0 ,
    \quad
    \tau \geq 0 .
\end{align}
\end{subequations}
Furthermore, the hereditary law (\ref{guT2D8}) reduces to the form
\begin{equation} \label{29a8Zn}
    \sigma(t)
    =
    \mathbb{C} \, \epsilon(t)
    -
    \int_{-\infty}^t
        \mathbb{K}(t-s)
         \, \epsilon(s)
    \, ds ,
\end{equation}
which may be regarded as the most general form of the hereditary law consistent with the general principles outlined in the foregoing and, together with (\ref{l3S3XT}) and (\ref{R4H4Kc}), can be taken as point of departure for all further developments.

\begin{example}[Maxwell-Wiechert model] \label{t3QNXc} {\rm In one dimension, the {\sl Maxwell-Wiechert} rheological model \cite{Wiechert:1893, Wiechert:1899} consists of one spring of stiffness $C_0$ in parallel with ${n}$ Maxwell units. Each Maxwell unit in turn consists of a spring of stiffness $C_i$ in series with a dashpot of viscosity $\eta_i$, $i=1,\dots,{n}$. The rheological model obeys the relations
\begin{subequations} \label{Rxxcp7}
\begin{align}
    & \label{5LptXv}
    \sigma(t) = C_0 \epsilon(t) + \sum_{i=1}^{n} C_i (\epsilon(t) - \epsilon_i^p(t)) ,
    \\ & \label{PMBnkp}
    \dot{\epsilon}_i^p(t) = \lambda_i (\epsilon(t) - \epsilon_i^p(t)) ,
\end{align}
\end{subequations}
with $\lambda_i = C_i/\eta_i$, where $\epsilon_i^p$, $i=1,\dots,{n}$, are the deformation evolutions of the dashpots. Eq.~(\ref{5LptXv}) represents the equilibrium response of the system, whereas (\ref{PMBnkp}) collects the rate equations for the dashpots. An elementary integration in time leads to a relation of the form (\ref{29a8Zn}), with
\begin{subequations} \label{uNx7d7}
\begin{align}
    & \label{0xrc8y}
    \mathbb{C} = C_0 + \sum_{i=1}^{n} C_i ,
    \\ & \label{Hn5a3N}
    \mathbb{K}(\tau)
    =
    \sum_{i=1}^{n} C_i \lambda_i \, {\rm e}^{-\lambda_i \tau} ,
    \quad
    \tau \geq 0 .
\end{align}
\end{subequations}
Eq.~(\ref{Hn5a3N}) is known as a {\sl Prony series} \cite{Prony:1795} and is often used to reduce and represent experimental data \cite{Qvale:2004, Knauss:2007, Zhao:2007}.  Assuming isotropic behavior, the preceding model can be extended to multiaxial deformations by recourse to a relaxation modulus of the form
\begin{equation}\label{3rIuyi}
    \mathbb{K}_{ijkl}(\tau)
    =
    \Big( {L}(\tau) - \frac{2}{3}{M}(\tau) \Big)\delta_{ij}\delta_{kl}
    +
    {M}(\tau) (\delta_{ik}\delta_{jl} + \delta_{il}\delta_{jk}) ,
    \quad
    \tau \geq 0, 
\end{equation}
with
\begin{equation}\label{yo7cHo}
    {L}(\tau)
    =
    \sum_{i=1}^l
    L_i
    {\rm e}^{-\lambda_i \tau} ,
    \quad
    {M}(\tau)
    =
    \sum_{i=1}^m
    M_i
    {\rm e}^{-\mu_i \tau} ,
    \quad
    \tau \geq 0 ,
\end{equation}
$L_i > 0$, $\lambda_i > 0$, $i=1,\dots,l$, and $M_i > 0$, $\mu_i > 0$, $i=1,\dots,m$, representing relaxation bulk and shear moduli, respectively. The corresponding elastic moduli in turn take the form
\begin{equation} \label{tvZ5uH}
    \mathbb{C}_{ijkl}
    =
    \Big( {B} - \frac{2}{3}{G} \Big) \, \delta_{ij}\delta_{kl}
    +
    {G} \, (\delta_{ik}\delta_{jl} + \delta_{il}\delta_{jk}) ,
\end{equation}
with
\begin{equation}
    {B} = B_0 + \sum_{i=1}^l \frac{L_i}{\lambda_i} ,
    \quad
    {G} = G_0 + \sum_{i=1}^m \frac{M_i}{\mu_i} ,  
\end{equation}
representing instantaneous elastic and shear moduli, respectively. 
} \hfill$\square$
\end{example}

\section{The local problem} \label{dGmq9v}

In addition to the fundamental physical properties enumerated above, we expect the hereditary law to be {\sl continuous}, in the sense that:
\begin{center}
{\sl\narrower "A 'small' change in the strain evolution should result in \\ a 'small' change in the stress evolution."}
\end{center}
\noindent
Additional restrictions on the hereditary operator follow from considerations of {\sl material stability}, which, in the present setting, is roughly the requirement that: 
\begin{center}
{\sl\narrower "A material that is unstressed at all times \\ must also be unstrained at all times."}
\end{center}
\noindent
A seemingly unrelated observation is that viscoelastic solids often have a {\sl fading memory property}, which Truesdell and Noll \cite{Truesdell:1965} enunciated as:
\begin{center}
{\sl\narrower "Events which occurred in the distant past have less influence \\ in determining the present response than those which occurred \\ in the recent past."}
\end{center}
While these general statements make eminent intuitive sense, they beg the question of how to understand 'smallness' or 'proximity', i.~e., how to topologize the spaces of strain and stress evolutions, and to what avail properties such as fading memory are required. 

We proceed to show that a precise and unambiguous characterization of these properties, and an elucidation of remarkable connections between them, can be achieved by requiring that the linear viscoelastic problem be {\sl well-posed}, i.~e., that the problem have a unique solution for a suitable class of configurations and loading conditions, and that the solution depend continuously on the data. As we shall see in Section~\ref{nD4ghN}, under reasonable assumptions on the regularity of the domain and the forcing, the question of well-posedness can be elucidated at the {\sl material point} level, i.~e., by considering the problem of determining the local strain evolution $\epsilon$ that gives rise to a given local stress evolution $\sigma$, or {\sl local stress control} problem. We therefore begin by investigating the properties of the local problem. 

\subsection{The local stress-control problem}

For given local stress evolution $\sigma : \mathbb{R} \to F$, the hereditary law (\ref{29a8Zn}) defines a {\sl convolution Volterra equation of the second kind} in the local strain evolution $\epsilon : \mathbb{R} \to E$ \cite{gripenberg:1990}. The crucial realization is that, conditions for the well-posedness of the problem, i.~e., existence, uniqueness and continuous dependence of the solutions, are conveniently supplied by the {\sl Lax-Milgram theorem} \cite[\S6.2.1]{Evans:1998}, which we proceed to examine. 

A suitable functional framework enabling this connection may be set forth as follows. We assume throughout that the properties (\ref{l3S3XT}) and (\ref{R4H4Kc}) are in force. Conveniently, we can then use the elastic moduli to metrize stresses and strains, leading to the following definitions.

\begin{definition}
[Local stress and strain spaces] We define the space $M$ of local strains as the linear space $E$ metrized by $\mathbb{C}$. We define the space $N$ of local stresses as the linear space $F$ metrized by $\mathbb{C}^{-1}$. As Euclidean spaces, $N = M^*$ and $M = N^*$ and the Riesz mapping is Hooke's law $\sigma(t) = \mathbb{C} \epsilon(t)$. 
\end{definition}

For $t \in (-\infty,T)$, the local evolution at a material point is characterized by a local strain evolution $\epsilon : (-\infty,T) \to M$ and a local stress evolution $\sigma : (-\infty,T) \to N$, where $T>0$ is a finite {\sl time horizon} beyond which the evolution of the material is of no interest. We restrict attention to local strain and stress evolutions that are square-sum\-mable over the interval of time $(-\infty,T)$. In order to allow for---and characterize---fading memory properties, we shall measure time according to a positive, continuous, non-increasing, integrable {\sl weighting function} $w : [0,+\infty) \to \mathbb{R}$, normalized to $w(0) = 1$, and denote by 
\begin{equation}
    d\mu(t) = w(T-t) \, dt ,
\end{equation}
the corresponding time measure. Weights, or {\sl influence functions}, were introduced by Mizel and Wang \cite{Mizel:1966} as a means of characterizing materials with fading memory. 

These considerations lead to the following spaces of local stress and strain evolutions. 

\begin{definition}[Spaces of local stress and strain evolutions] \label{g6PbsP} The space $\mathcal{M}$ of local strain evolutions is the weighted time-dependent Lebesgue space $L^2((-\infty,T), M, \mu)$, and the space $\mathcal{N}$ of local stress evolutions is the weighted time-dependent Lebesgue space $L^2((-\infty,T), N, \mu)$, both with the usual metrization (see, e.~g., \cite[\S5.9.2]{Evans:1998}). As Hilbert spaces, $\mathcal{N} = \mathcal{M}^*$ and $\mathcal{M} = \mathcal{N}^*$, with Riesz mapping given by the timewise application of Hooke's law $\sigma(t) = \mathbb{C} \epsilon(t)$. 
\end{definition}

In this functional framework, the hereditary law (\ref{29a8Zn}) takes the form
\begin{equation} \label{ZER2Cb}
    \sigma = \mathbb{C} (I - P) \epsilon ,
\end{equation}
with $\epsilon \in\mathcal{M}$ and $\sigma \in \mathcal{N}$, where $I \in \mathcal{B}(\mathcal{M})$ is the identity operator on $\mathcal{M}$ and 
\begin{equation} \label{6dEyzV}
    P \epsilon(t)
    :=
    \int_{-\infty}^t
        \mathbb{C}^{-1}
        \mathbb{K}(t-s)
         \, \epsilon(s)
    \, ds ,
\end{equation}
is the plastic-strain operator from $\mathcal{M}$ into itself. The local stress-control problem can then be recast in variational form by introducing the bilinear form 
\begin{equation} \label{XpR7Eg}
    a(\xi,\eta)
    :=
    ( (I-P) \xi, \eta) 
    =
    ( \xi, \eta )
    -
    ( P \xi, \eta) ,
\end{equation}
with $\xi$, $\eta \in \mathcal{M}$ and
\begin{equation}
    ( \xi, \eta )
    :=
    \int_{-\infty}^T \mathbb{C} \xi(t) \cdot \eta(t) \, d\mu(t) .
\end{equation}
Then, for $\sigma \in \mathcal{N}$ given, $\epsilon \in \mathcal{M}$ is a variational solution of problem (\ref{ZER2Cb}) if
\begin{equation} \label{DxzJy5}
    a(\epsilon,\eta)
    =
    \langle \sigma, \eta \rangle 
    :=
    \int_{-\infty}^T \sigma(t) \cdot \eta(t) \, d\mu(t),
\end{equation}
for every $\eta \in \mathcal{M}$. 

\subsection{Existence and uniqueness of solutions} \label{9YaGYg}

Conditions for the existence, uniqueness and continuous dependence of solutions of the local stress-control problem (\ref{DxzJy5}) now follow from the following application of the Lax-Milgram theorem \cite[\S6.2.1]{Evans:1998}.

\begin{thm}[Local well-posedness] \label{z4nJex}
Assume:
\begin{itemize}
\item[i)] (Elastic stability). $\mathbb{C} \in L(E,F)$, $\mathbb{C}^T = \mathbb{C}$, $\mathbb{C} > 0$.
\item[ii)] (Fading memory) There is a positive, continuous, non-increasing, integrable weighting function $w : [0,+\infty) \to \mathbb{R}$, normalized to $w(0) = 1$,  satisfying the semigroup condition
\begin{equation} \label{ttm59H}
    w(s) \geq w(s-t) w(t) ,
\end{equation}
and $\gamma < 1$ such that
\begin{equation} \label{jVy9Vz}
    \int_0^{+\infty}
        \| \mathbb{K}(\tau) \|
    \, w^{-1}(\tau) \, d\tau
    \leq
    \gamma ,
\end{equation}
where $\| \cdot \|$ denotes the operator norm.
\item[iii)] (Stress regularity) $\sigma \in \mathcal{N}$.
\end{itemize}
Then, problem (\ref{DxzJy5}) has a unique solution $\epsilon \in \mathcal{M}$ and
\begin{equation} \label{cnb4Tz}
    \|\epsilon\| \leq \frac{1}{1-\gamma} \|\sigma\| .
\end{equation}
\end{thm}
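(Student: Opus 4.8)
The plan is to cast (\ref{DxzJy5}) in the Lax--Milgram framework on the Hilbert space $\mathcal{M}$ and to reduce the three hypotheses of that theorem---boundedness and coercivity of the bilinear form $a$ of (\ref{XpR7Eg}) and boundedness of the load functional $\eta \mapsto \langle \sigma,\eta\rangle$ of (\ref{DxzJy5})---to a single estimate, namely that the plastic-strain operator $P$ of (\ref{6dEyzV}) is a strict contraction on $\mathcal{M}$,
\[
    \|P\|_{\mathcal{B}(\mathcal{M})} \,\le\, \gamma \,<\, 1 .
\]
Granting this, boundedness of $a$ is immediate from $|a(\xi,\eta)| = |((I-P)\xi,\eta)| \le (1+\gamma)\,\|\xi\|\,\|\eta\|$; coercivity follows from $a(\xi,\xi) = \|\xi\|^2 - (P\xi,\xi) \ge (1-\gamma)\,\|\xi\|^2$ (note that $a$ need not be symmetric, since $P$ is a causal convolution, but Lax--Milgram does not require symmetry); and the functional is bounded because $|\langle\sigma,\eta\rangle| \le \|\sigma\|_{\mathcal{N}}\,\|\eta\|_{\mathcal{M}}$ by hypothesis~(iii) and the duality $\mathcal{N} = \mathcal{M}^*$ of Definition~\ref{g6PbsP}. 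Lax--Milgram then delivers a unique $\epsilon \in \mathcal{M}$ solving (\ref{DxzJy5}), and the usual a~priori bound $\alpha\|\epsilon\|^2 \le a(\epsilon,\epsilon) = \langle\sigma,\epsilon\rangle \le \|\sigma\|\,\|\epsilon\|$ with coercivity constant $\alpha = 1-\gamma$ gives exactly (\ref{cnb4Tz}). (Alternatively, (\ref{DxzJy5}) is the weak form of $(I-P)\epsilon = \mathbb{C}^{-1}\sigma$, and $I-P$ is boundedly invertible by a Neumann series with the same norm bound.)

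\textbf{The contraction estimate.} This is the substance of the proof. Write $|v|_M := (\mathbb{C}v\cdot v)^{1/2}$ for $v \in M$, set $k(\tau) := \|\mathbb{C}^{-1}\mathbb{K}(\tau)\|_{L(M)}$ (which equals $\|\mathbb{K}(\tau)\|$ since the Riesz map $\mathbb{C}$ is an isometry of $M$ onto $N$), and recall $\mathbb{K}(\tau) = 0$ for $\tau < 0$ by (\ref{l3S3XT}). For $\epsilon \in \mathcal{M}$ one has the pointwise bound $|(P\epsilon)(t)|_M \le \int_0^{\infty} k(\tau)\,|\epsilon(t-\tau)|_M\,d\tau$. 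I would split $k(\tau) = \big(k(\tau)\,w^{-1}(\tau)\big)^{1/2}\big(k(\tau)\,w(\tau)\big)^{1/2}$ and apply the Cauchy--Schwarz inequality in $\tau$, using hypothesis~(ii) to bound the factor $\int_0^\infty k(\tau)\,w^{-1}(\tau)\,d\tau \le \gamma$, which yields
\[
    |(P\epsilon)(t)|_M^2 \,\le\, \gamma \int_0^{\infty} k(\tau)\,w(\tau)\,|\epsilon(t-\tau)|_M^2 \, d\tau .
\]
Multiplying by $w(T-t)$, integrating over $t \in (-\infty,T)$, invoking Tonelli (all integrands being nonnegative), and substituting $s = t-\tau$ recasts $\|P\epsilon\|_{\mathcal{M}}^2$ as $\gamma \int_0^\infty k(\tau) \big( \int_{-\infty}^{T-\tau} w(\tau)\,w((T-s)-\tau)\,|\epsilon(s)|_M^2 \, ds \big)\, d\tau$. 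Now the semigroup condition (\ref{ttm59H}), applied with the arguments $0 \le \tau \le T-s$, gives $w(\tau)\,w((T-s)-\tau) \le w(T-s)$, and since $w \le w(0) = 1$ forces $w^{-1} \ge 1$ one also has $\int_0^\infty k(\tau)\,d\tau \le \int_0^\infty k(\tau)\,w^{-1}(\tau)\,d\tau \le \gamma$. Combining these reabsorptions gives $\|P\epsilon\|_{\mathcal{M}}^2 \le \gamma^2 \|\epsilon\|_{\mathcal{M}}^2$; in particular $P\epsilon \in \mathcal{M}$, so that $P$ is well defined on $\mathcal{M}$ (measurability of $P\epsilon$ being a routine Tonelli/Fubini matter), and $\|P\|_{\mathcal{B}(\mathcal{M})} \le \gamma$.

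\textbf{Expected obstacle.} The only genuinely delicate point is the weight bookkeeping in this last computation: one must arrange the Cauchy--Schwarz split so that precisely the combination $k\,w^{-1}$ appearing in (\ref{jVy9Vz}) is produced, and then invoke the semigroup inequality (\ref{ttm59H}) with exactly the right arguments to reabsorb the residual factor $w(\tau)$ into the target norm $\|\epsilon\|_{\mathcal{M}}$. Everything else---the estimates for $a$, the identification $\mathcal{N} = \mathcal{M}^*$ with Riesz map the timewise Hooke's law, and the invocation of Lax--Milgram with its quantitative bound---is routine and uses only hypotheses (i)--(iii) as stated.
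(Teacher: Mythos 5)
Your proposal is correct and follows essentially the same route as the paper: both reduce continuity and coercivity of $a(\cdot,\cdot)$ to the single operator-norm bound $\|P\|\leq\gamma$ and then invoke Lax--Milgram, with the a priori estimate (\ref{cnb4Tz}) coming from the coercivity constant $1-\gamma$. The only difference is presentational: the paper obtains $\|P\|\leq(c_1c_2)^{1/2}=\gamma$ by citing the Schur test with the row/column conditions (\ref{hEg3n2}), whereas you inline that test's proof via the Cauchy--Schwarz splitting of $k(\tau)$, using the semigroup property (\ref{ttm59H}) and $w\leq 1$ in exactly the same way.
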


\begin{proof} \normalsize
a) \underline{Boundedness and continuity.} 
Begin by rewriting the plastic-strain operator in the standard form 
\begin{equation} \label{Tm74tX}
    P\epsilon(t) 
    =
    \int_{-\infty}^T k(t,s) \, \epsilon(s) \, d\mu(s) ,
\end{equation}
with
\begin{equation} \label{P4FxNH}
    k(t,s)
    =
    \mathbb{C}^{-1} 
    \mathbb{K}(t-s) \, H(t-s) \, w^{-1}(T-s) ,
\end{equation}
where $H(t)$ is the Heaviside step function. A sufficient condition for $P$ to be bounded (see, e.~g., \cite[Thm.~1.6]{Conway:1990}) is that there be constants $c_1>0$, $c_2>0$ such that
\begin{subequations} \label{hEg3n2} 
\begin{align}
    &
    \int_{-\infty}^T
        \| k(t,s) \| \, 
    \, d\mu(s) 
    \leq 
    c_1 ,
    \qquad
    \mu\text{--a.~e.} ,
    \\ &
    \int_{-\infty}^T
        \| k(t,s) \| \, 
    \, d\mu(t) 
    \leq
    c_2 ,
    \qquad
    \mu\text{--a.~e.} 
\end{align}
\end{subequations}
Then, 
\begin{equation} \label{R3M3sd}
    \| P \| 
    \leq
    ( c_1 c_2 )^{1/2} ,
\end{equation}
which supplies an upper bound for the operator norm of the plastic-strain operator $P$. Inserting (\ref{P4FxNH}) into (\ref{hEg3n2}) we obtain the equivalent conditions
\begin{subequations} \label{8u4G3x}
\begin{align}
    &    \label{kMC78x}
    \int_{-\infty}^t
        \| \mathbb{K}(t-s) \|
    \, ds
    \leq
    c_1 ,
    &
    \mu\text{--a.~e.} ,
    \\ &   \label{7KSg7j}
    \int_s^T
        \| \mathbb{K}(t-s) \|
        \, \frac{w(T-t)}{w(T-s)}
    \, dt
    \leq
    c_2 ,
    &
    \mu\text{--a.~e.} 
\end{align}
\end{subequations}
Estimating with the aid of (ii), we obtain 
\begin{subequations} 
\begin{align}
\begin{split}
    &
    \int_{-\infty}^t
        \| \mathbb{K}(t-s) \|
    \, ds
    =
    \int_0^{+\infty}
        \| \mathbb{K}(\tau) \|
    \, d\tau
    \leq  
    \int_0^{+\infty}
        \| \mathbb{K}(\tau) \|
        \, w^{-1}(\tau)
    \, d\tau
    \leq
    \gamma ,
\end{split}
\\
\begin{split}
    &   
    \int_s^T
        \| \mathbb{K}(t-s) \|
        \, \frac{w(T-t)}{w(T-s)}
    \, dt
    \leq 
    \int_s^T
        \| \mathbb{K}(t-s) \|
        \, w^{-1}(t-s)
    \, dt
    = \\ &
    \int_0^{T-s}
        \| \mathbb{K}(\tau) \|
        \, w^{-1}(\tau)
    \, d\tau
    \leq 
    \int_0^{+\infty}
        \| \mathbb{K}(\tau) \|
        \, w^{-1}(\tau)
    \, d\tau
    \leq
    \gamma ,
\end{split}
\end{align}
\end{subequations}
whence (\ref{8u4G3x}) is satisfied with $c_1=c_2=\gamma$. From (\ref{R3M3sd}) and Cauchy-Schwarz,
\begin{equation}
    | a(\xi,\eta) |
    \leq
    (1 + \gamma) \| \xi \| \, \| \eta \| ,
\end{equation}
which shows that the bilinear form $a(\cdot,\cdot)$ is indeed continuous.

b) \underline{Coercivity}. From (\ref{XpR7Eg}) and (\ref{R3M3sd}), 
\begin{equation} 
    a(\xi,\xi)
    =
    \| \xi \|^2
    -
    ( P \xi, \xi) 
    \geq
    (1 - \gamma) \, \| \xi \|^2 ,
\end{equation}
and the bilinear form $a(\cdot,\cdot)$ is {\sl coercive}, or $H$-{\sl elliptic}, with constant $1-\gamma > 0$. 

c) \underline{Existence, uniqueness and continuous dependence}. The claims follow from (a) and (b), and an application of the Lax-Milgram theorem (see, e.~g., \cite[\S6.2.1]{Evans:1998}).
\end{proof}

A trivial example of a weighting function satisfying the semigroup property (\ref{ttm59H}) is 
\begin{equation} \label{AJN7Uj}
    w(\tau) = {\rm e}^{- \lambda_0 \tau}, 
\end{equation}
for some limiting {\sl decay rate} $\lambda_0 > 0$. The fading memory condition (\ref{jVy9Vz}) then becomes
\begin{equation} \label{PmzATq}
    \int_0^{+\infty}
        \| \mathbb{K}(\tau) \|
    \, {\rm e}^{\lambda_0 \tau} \, d\tau
    \leq
    \gamma ,
\end{equation}
which requires $\mathbb{K}(\tau)$ to decay faster than ${\rm e}^{-\lambda_0 \tau}$ as $\tau\to+\infty$. Thus, strains occurring in the distant past have an exponentially decreasing influence on the present, which is a statement of fading memory. 

\subsection{Contractivity and material stability} \label{wCAnUL}

Theorem~\ref{z4nJex} affords a compelling unification of boundedness, continuity and fading memory concepts. Thus, the boundedness property (\ref{jVy9Vz}) of the hereditary kernel, which in turn ensures the boundedness---and hence continuity---of the plastic-strain operator, gives precise expression to the intuitive notion of fading memory, which is pervasive in the literature \cite{coleman1960a, coleman1961a, coleman1961b, coleman1962a}.  Since, in addition, $\gamma < 1$, (\ref{jVy9Vz}) sets forth a {\sl contractivity} property. This contractivity property in turn gives precise expression to the notion of {\sl material stability}, as it ensures coercivity and well-posedness of the stress-controlled problem. 

It is readily verified that the contractivity condition indeed renders the Picard iteration mapping \cite{Linz:1985}
\begin{equation} \label{7VNCm8}
    \epsilon 
    \mapsto
    \mathbb{C}^{-1} \sigma + P \epsilon
\end{equation}
a contraction from $\mathcal{M}$ to itself. This iteration is the basis of the classical method of {\sl equivalent body forces} \cite{Mendelson:1968}. By the Banach fixed point theorem \cite[\S9.2.1]{Evans:1998}, the mapping then has a fixed point satisfying (\ref{ZER2Cb}), which comports with the conclusions of the Lax-Milgram theorem. Indeed, the Lax-Milgram theorem can be proved using the Banach fixed point theorem \cite[\S6.2.1]{Evans:1998}.

\begin{example}[Maxwell-Wiechert model] {\rm 
We investigate the contractivity properties of the Maxwell-Wiechert model introduced in Example~\ref{t3QNXc}. From (\ref{tvZ5uH}), for every $\xi \in M$ we have
\begin{equation} \label{bbEwiB}
    \| \xi \|^2
    =
    \mathbb{C} \, \xi \cdot \xi
    = 
    {B} \, {\rm tr}(\xi)^2 
    +
    2 {G} \xi'\cdot\xi' ,
    \quad
    \xi' := \xi - \frac{1}{3} {\rm tr}(\xi) \, I ,
\end{equation}
and, for every $\eta \in N$, 
\begin{equation} \label{L2H3Io}
    \| \eta \|^2
    =
    \mathbb{C}^{-1} \, \eta \cdot \eta
    = 
    \frac{1}{9 {B}} \, {\rm tr}(\eta)^2 
    +
    \frac{1}{2 {G}} \, \eta'\cdot\eta' ,
    \quad
    \eta' := \eta - \frac{1}{3} {\rm tr}(\eta) \, I .
\end{equation}
Testing $\mathbb{K}(\tau)$ with $\xi \in M$ such that $\xi' = 0$, using (\ref{bbEwiB}) and (\ref{L2H3Io}), we find
\begin{equation}
    \frac{\|\mathbb{K}(\tau) \xi\|}{\|\xi\|}
    =
    \frac{L(\tau)}{B} .
\end{equation}
Likewise, testing $\mathbb{K}(\tau)$ with $\xi \in M$ such that ${\rm tr} \xi = 0$, using (\ref{bbEwiB}) and (\ref{L2H3Io}), we find
\begin{equation}
    \frac{\|\mathbb{K}(\tau) \xi\|}{\|\xi\|}
    =
    \frac{M(\tau)}{G} .
\end{equation}
From these identities and volumetric/deviatoric orthogonality, it follows that
\begin{equation} \label{ThnZsq}
    \|\mathbb{K}(\tau)\|
    =
    \max \Big\{ \frac{L(\tau)}{B}, \frac{M(\tau)}{G} \Big\} .
\end{equation}
The same result is obtained evaluating $\|\mathbb{K}(\tau)\|$ directly as the square root of the maximum eigenvalue of $\mathbb{K}^T(\tau) \mathbb{C}^{-1} \mathbb{K}(\tau)$ with respect to $\mathbb{C}$. Suppose, for simplicity, that the material is incompressible, corresponding to the limit $B_0 \uparrow +\infty$. Then, (\ref{ThnZsq}) simplifies to
\begin{equation} \label{bxE3NW}
    \|\mathbb{K}(\tau)\|
    =
    \frac{M(\tau)}{G}
\end{equation}
Choose, for definiteness, an exponential weight of the form (\ref{AJN7Uj}), with
\begin{equation}
    \lambda_0 
    <
    \min_{1\leq i \leq m} \mu_i .
\end{equation}
Then, from (\ref{bxE3NW}) we have
\begin{equation} \label{TNEmVA}
    \int_0^{+\infty}
        \|\mathbb{K}(\tau)\| \, w^{-1}(\tau)
    \, d\tau
    =
    \sum_{i=1}^m \frac{M_i/(\mu_i-\lambda_0)}{G} 
\end{equation}
Set $\lambda_0=0$. Then, (\ref{TNEmVA}) reduces to 
\begin{equation}
    \int_0^{+\infty}
        \|\mathbb{K}(\tau)\| \, w^{-1}(\tau)
    \, d\tau
    =
    \frac{\sum_{i=1}^m M_i/\mu_i}{G_0+\sum_{i=1}^m M_i/\mu_i} 
    <
    1 ,
\end{equation}
since $G_0>0$. By continuity, it follows that there is an interval of values of $\lambda_0$ where the contractivity condition $\gamma < 1$ is satisfied. 
} \hfill$\square$
\end{example}

\section{The viscoelastic boundary-value problem} \label{nD4ghN}

So far, we have restricted attention to local material behavior, i.~e., the response of a single material point. However, it is straightforward to see that the properties of the local material behavior in fact carry over, {\sl mutatis mutandi}, to the global level, i.~e., to boundary value problems governing the response of solids and structures (see, e.~g., \cite{Golden:1989} for background).

To this end, consider a viscoelastic solid occupying a domain $\Omega \subset \mathbb{R}^{n}$ with state defined by a time-dependent displacement field $u : \Omega \times (-\infty,T) \to \mathbb{R}^{n}$. The governing compatibility and equilibrium laws are
\begin{subequations}\label{sIes1A}
\begin{align}
    &   \label{mPn7kf}
    \epsilon(x,t)
    =
    \frac{1}{2} \big( D u(x,t) + D u^T(x,t) \big) ,
    &  \text{in } \Omega \times (-\infty,T) ,
    \\ & \label{eqGammaD}
    u(x,t) = g(x,t) ,
    &   \text{on } \Gamma_D \times (-\infty,T) ,
\end{align}
\end{subequations}
and
\begin{subequations}\label{fRoa1l}
\begin{align}
    &   \label{DuVp7m}
    \operatorname{\rm div} \sigma(x,t) + f(x,t) = 0 ,
    &
    \text{in } \Omega \times (-\infty,T),
    \\ &
    \sigma(x,t) \, \nu(x) = h(x,t)  ,
    &  \label{eqGammaN}
    \text{on } \Gamma_N \times (-\infty,T),
\end{align}
\end{subequations}
respectively, where $t \in (-\infty,T)$, $\epsilon : \Omega \times (-\infty,T) \to E$ is a time-dependent strain field, $\sigma : \Omega \times (-\infty,T) \to F$ is a time-dependent stress field, $f : \Omega \times (-\infty,T) \to \mathbb{R}^{n}$ are body forces, $g : \Gamma_D \times (-\infty,T) \to \mathbb{R}^{n}$ are displacements prescribed on the Dirichlet boundary $\Gamma_D$, $h : \Gamma_N \times (-\infty,T) \to \mathbb{R}^{n}$ are tractions applied on the Neumann boundary $\Gamma_N$, and $\nu : \partial\Omega \to S^{n-1}$ denotes the unit outward normal. In addition, we assume that the hereditary law (\ref{29a8Zn}) is satisfied pointwise at material points, i.~e.,
\begin{equation} \label{X7C2hr}
    \sigma(x,t)
    =
    \mathbb{C} \epsilon(x,t) 
    - 
    \int_{-\infty}^t
        \mathbb{K}(t-s) \epsilon(x,s)
    \, ds ,
    \quad
    \text{a.~e.~in} \;\, \Omega \times (-\infty,T) .
\end{equation}
We wish to ascertain conditions under which a displacement evolution $u(x,t)$ exists and is unique for given forcing $f(x,t)$, $g(x,t)$ and $h(x,t)$. 

One first concern is to identify restrictions on the domain $\Omega$, the Dirichlet boundary $\Gamma_D$ and the Neumann boundary $\Gamma_N$ ensuring that the operations expressed in (\ref{sIes1A}), (\ref{fRoa1l}) and (\ref{X7C2hr}) are well-defined in the sense of traces. A detailed treatment can be found in \cite[\S1.2 and 1.3]{Temam:1979}. For present purposes, it suffices to assume that $\Omega$ is a connected, open, bounded, nonempty Lipschitz set, and that $\Gamma_D$, $\Gamma_N$ are disjoint open subsets of $\partial\Omega$ such that
\begin{equation}\label{TX5yXY}
    \overline{\Gamma}_D \cap \overline{\Gamma}_N = \partial\Omega,\hskip3mm
    \mathcal{H}^{n-1}(\overline\Gamma_N\setminus\Gamma_N)
    =
    \mathcal{H}^{n-1}(\overline\Gamma_D\setminus\Gamma_D)=0,
    \hskip3mm\Gamma_D\ne\emptyset ,
\end{equation}
where $\mathcal{H}^{n-1}$ is the Hausdorff measure of dimension $n-1$. If, in addition, $\Gamma_D$ and $\Gamma_N$ are Lipschitz subsets of $\partial\Omega$, then by translation we may take $g=0$ without loss of generality (see, e.~g., \cite[\S2.1]{Conti:2018}).

These considerations lead to the following definition.

\begin{definition}[Space of global evolutions] \label{qk2mTK} For $t \in (-\infty,T)$ and $\Omega$, $\Gamma_D$ and $\Gamma_N$ sufficiently regular, the space of global displacement fields is
\begin{equation}
    V = 
    \{
        u \in H^1(\Omega; \mathbb{R}^{n}) \, : \,
        u = 0 \text{ on } \Gamma_D
    \} .
\end{equation}
Furthermore, the space of global displacement evolutions is the Bochner space $\mathcal{V} = L^2((-\infty,T), V, \mu)$ with the usual metrization (see, e.~g., \cite[\S5.9.2]{Evans:1998}). When convenient, we shall use the notation $u(x,t) := u(t)(x)$ for elements $u \in \mathcal{V}$.
\end{definition}

As in the local case, we can recast the problem (\ref{sIes1A}), (\ref{fRoa1l}), (\ref{X7C2hr}) in variational form as
\begin{equation} \label{LZ9JWu}
    a(u,v) - b(v) = 0, \quad \forall v \in \mathcal{V} , 
\end{equation}
where now we write
\begin{equation} \label{3brgWV}
\begin{split}
    & 
    a(u,v) 
    := 
    \int_{-\infty}^T
        \Big\{
            \int_\Omega
                \Big(
                    \mathbb{C} E u(x,t) 
                    - 
                    \int_{-\infty}^t
                        \mathbb{K}(t-s) E u(x,s)
                    \, ds
                \Big)
                E v(x,t)
            \, dx
        \Big\}
    \, dt ,
\end{split}
\end{equation}
with 
\begin{equation}\label{Hlusp6}
    E u = \frac{1}{2}(D u + D u^T) ,
\end{equation}
for the strain operator, and 
\begin{equation} \label{6dX954}
\begin{split}
    & 
    b(v)
    := 
    \int_{-\infty}^T
        \int_\Omega
            f(x,t) \cdot v(x,t)
        \, dx
    \, dt
    + 
    \int_{-\infty}^T
        \int_{\Gamma_N}
            h(x,t) \cdot v(x,t)
        \, d\mathcal{H}^{n-1}(x)
    \, dt ,
\end{split}
\end{equation}
with the time integrals understood in the sense of Bochner \cite[\S E.5]{Evans:1998}

Again, conditions for existence and uniqueness of solutions, as well as continuous dependence on the data, are set forth by the Lax-Milgram theorem \cite[\S6.2.1]{Evans:1998}.

\begin{thm} [Existence, uniqueness, continuous dependence] \label{f060yF}
Let $\Omega \subset \mathbb{R}^{{n}}$ be nonempty, connected, open, bounded and Lipschitz. Let $\Gamma_D$, $\Gamma_N$ be disjoint open Lipschitz subsets of $\partial\Omega$ satisfying (\ref{TX5yXY}). Suppose that assumptions (i) and (ii) of Theorem~\ref{z4nJex} hold. Suppose, in addition, that $f \in L^2((-\infty,T), H^{-1}(\Omega; \mathbb{R}^{n}),\mu)$ and $h \in L^2((-\infty,T), H^{-1/2}(\Gamma_N; \mathbb{R}^{n}), \mu)$. Then, the problem (\ref{LZ9JWu}) has a unique solution $u \in \mathcal{V}$ and
\begin{equation} \label{3MyDYE}
    \| u \| 
    \leq 
    C \,
    \big(
        \| f \|_{L^2((-\infty,T), H^{-1}(\Omega;\mathbb{R}^n), \mu)}
        +
        \| h \|_{L^2((-\infty,T), H^{-1/2}(\partial\Omega;\mathbb{R}^n), \mu)}
    \big) ,
\end{equation} 
for some $C>0$. 
\end{thm}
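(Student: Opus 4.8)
The plan is to reduce (\ref{LZ9JWu}) to an abstract variational problem on the Hilbert space $\mathcal{V}$ of Definition~\ref{qk2mTK} and to verify the hypotheses of the Lax--Milgram theorem, following verbatim the three-step pattern of the proof of Theorem~\ref{z4nJex}. The hereditary term will be handled by the very same Schur-test estimate used there, now applied pointwise in $x$ and integrated over $\Omega$; the genuinely new ingredient is Korn's inequality, needed to convert control of the symmetric gradient $Eu$ into control of the full $H^1$ norm of $u$. As preliminaries I would note that $w$ non-increasing with $w(0)=1$ forces $0<w\le 1$, so $\mathcal{V}=L^2((-\infty,T),V,\mu)$ is well-defined and the weight enters only one-sidedly in all the estimates below, and I would invoke the reduction to $g=0$ already recorded above (legitimate since $\Gamma_D$, $\Gamma_N$ are Lipschitz).

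\emph{Boundedness and coercivity of $a$.} For the elastic term, $\mathbb{C}$ bounded together with $\|Eu\|_{L^2(\Omega)}\le\|Du\|_{L^2(\Omega)}$ gives a bound by $\|u\|_{\mathcal{V}}\|v\|_{\mathcal{V}}$ after integration in time against $d\mu$. For the hereditary term, fix $x\in\Omega$: the map $Eu(x,\cdot)$ lies in $L^2((-\infty,T),M,\mu)$, and part (a) of the proof of Theorem~\ref{z4nJex}, which uses only assumption (ii), shows that the plastic-strain operator $P$ satisfies $\|P(Eu(x,\cdot))\|\le\gamma\,\|Eu(x,\cdot)\|$ in that space; squaring, integrating over $\Omega$ by Tonelli, and applying Cauchy--Schwarz yields $|a(u,v)|\le(1+\gamma)\,\|u\|_{\mathcal{V}}\,\|v\|_{\mathcal{V}}$ and, for $v=u$, $a(u,u)\ge(1-\gamma)\int_{-\infty}^T\!\!\int_\Omega \mathbb{C}Eu(x,t)\cdot Eu(x,t)\,dx\,d\mu(t)$. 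Now Korn's inequality on $V$---valid because $\Omega$ is bounded, connected and Lipschitz and $\Gamma_D\ne\emptyset$---provides $c_K>0$ with $\|v\|_{H^1(\Omega)}\le c_K\|Ev\|_{L^2(\Omega)}$ for all $v\in V$; integrating this in time against $d\mu$ and combining with $\mathbb{C}>0$ gives $a(u,u)\ge\kappa\,\|u\|_{\mathcal{V}}^2$ for some $\kappa>0$, so $a$ is $\mathcal{V}$-elliptic.

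\emph{Boundedness of $b$ and conclusion.} The body-force and traction terms of $b$ in (\ref{6dX954}) are each estimated by the relevant duality pairing ($H^{-1}$ against $H^1$, and $H^{-1/2}$ against the trace) at fixed $t$ followed by Cauchy--Schwarz in the measure $\mu$, using the continuity of the trace operator $H^1(\Omega;\mathbb{R}^n)\to H^{1/2}(\Gamma_N;\mathbb{R}^n)$; this gives $|b(v)|\le C\big(\|f\|+\|h\|\big)\,\|v\|_{\mathcal{V}}$ with the norms as in (\ref{3MyDYE}), so $b\in\mathcal{V}^*$. With $a$ bounded and coercive and $b\in\mathcal{V}^*$, the Lax--Milgram theorem \cite[\S6.2.1]{Evans:1998} furnishes a unique $u\in\mathcal{V}$ solving (\ref{LZ9JWu}) together with the a priori bound $\|u\|_{\mathcal{V}}\le\kappa^{-1}\|b\|_{\mathcal{V}^*}$, which is (\ref{3MyDYE}). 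I expect the coercivity step to be the main obstacle: it is the sole place where the geometric hypotheses on $\Omega$, $\Gamma_D$ and $\Gamma_N$ enter, and it rests on a Korn inequality for the mixed-boundary space $V$ which then has to be propagated, uniformly in time, through the weighted Bochner structure---everything else is a routine transcription of the local argument of Theorem~\ref{z4nJex}.
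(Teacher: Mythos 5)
Your proposal is correct and follows essentially the same route as the paper: verify continuity of $a$ via the local operator bound $\|P\|\le\gamma$ from Theorem~\ref{z4nJex} transferred to the Bochner space, establish coercivity via $a(u,u)\ge(1-\gamma)\|Eu\|^2$ combined with Korn's inequality on $V$, show $b\in\mathcal{V}^*$ using the trace theorem, and conclude with Lax--Milgram and the a priori estimate. The only cosmetic difference is that you carry out the space--time bookkeeping by fixing $x$ and applying Tonelli, where the paper invokes Bochner's theorem directly; the substance is identical.
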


\begin{proof} \normalsize 
a) \underline{Continuity.} Let $u, v \in \mathcal{V}$. Then, $Eu, Ev \in {L^2((-\infty,T), L^2(\Omega,E), \mu)}$. By Bochner's theorem \cite[Theorem E.5.7]{Evans:1998} and (\ref{R3M3sd}), we have 
\begin{equation}
\begin{split}
    | a(u,v) |
    & \leq 
    \| E u \| \, 
    \| E v \|
    + 
    \| P E u \| \, 
    \| E v \|
    \leq 
    (1 + \gamma) \,
    \| E u \| \, 
    \| E v \| .
\end{split}
\end{equation}
By Poincar\'e's inequality (see, e.~g., \cite[Theorem~6.30]{AdamsFournier:2003}), $\| E u \|$ is dominated by $\| u \|$ in $\mathcal{V}$ and
\begin{equation}
\begin{split}
    | a(u,v) |
    & \leq 
    (1 + \gamma) \, C \,
    \| u \| \, \| v \| ,
\end{split}
\end{equation}
for some $C > 0$. In addition, by Bochner's theorem \cite[Theorem E.5.7]{Evans:1998}, the trace theorems and the assumed regularity of $f$ and $h$ it follows that $b \in \mathcal{V}^*$. 

b) \underline{Coercivity.} Proceeding as in Theorem~\ref{z4nJex} and appealing to Bochner's theorem \cite[Theorem E.5.7]{Evans:1998}, we obtain
\begin{equation} \label{3T2wfd}
    a(u,u) 
    \geq 
    (1-\gamma) \,
    \| E u \|^2 .
\end{equation}
Korn's inequality \cite{Ciarlet:2010} then ensures that $\| E u \|$ dominates $\| u \|$ in $\mathcal{V}$, and 
\begin{equation}
    a(u,u) 
    \geq 
    (1-\gamma) \, C \, \| u \|^2 ,
\end{equation}
for some $C > 0$, which proves coercivity.

c) \underline{Continuous dependence.} Existence and uniqueness are a consequence of the Lax-Milgram theorem \cite[\S6.2.1]{Evans:1998}, (a) and (b). In addition, testing (\ref{LZ9JWu}) with $v=u$ and using (\ref{3T2wfd}), we obtain
\begin{equation}
\begin{split}
    (1-\gamma) \, C \, \| u \|^2 
    \leq
    a(u,u) 
    =
    b(u)
    \leq \| b \| \, \|u\| .
\end{split}
\end{equation}
Estimating (\ref{6dX954}), 
\begin{equation}
    \| b \|
    \leq
    C \,
    \big(
        \| f \|_{L^2((-\infty,T), H^{-1}(\Omega;\mathbb{R}^n), \mu)}
        +
        \| h \|_{L^2((-\infty,T), H^{-1/2}(\partial\Omega;\mathbb{R}^n), \mu)} 
    \big) ,
\end{equation}
for some $C>0$, whence (\ref{3MyDYE}) follows, as advertised. 
\end{proof}

The main significance of the preceding theorem is twofold. The theorem shows that, under mild regularity assumptions on the domain and the forcing, the local properties of the hereditary law, as stipulated in Theorem~\ref{z4nJex}, ensure well-posedness at the global level. In addition, the theorem shows that the functional framework under consideration
indeed sets forth a natural topology for the evolutions of linear viscoelastic solids. Finally, the theorem gives us license to focus attention at the local level, which we do henceforth. 

\section{Approximation of the hereditary operator} \label{LBIbps}

As already noted, a far-reaching consequence of the preceding analysis is the identification of a suitable functional framework in which to elucidate questions of convergence and approximation. One type of approximation concerns numerical schemes aimed at generating converging sequences of solutions \cite{Linz:1985, Tallec:1990, Reese:1998, Govindjee:2004}. By contrast, here we focus on approximation of the hereditary kernel itself, in the spirit of experimental data reduction and constitutive identification. 

\subsection{Approximation of the plastic-strain operator}
Suppose that an approximation scheme results in a sequence $(P_h)$ of approximate plastic-strain operators converging to $P$ in the operator norm. Suppose that the operators $(P_h)$ satisfy the conditions of Theorem~\ref{z4nJex} uniformly with constant $\gamma$. Let $\epsilon_h \in \mathcal{M}$ be the solution of the approximate problem
\begin{equation} \label{fZM2ZA}
    \sigma = \mathbb{C} (I - P_h) \epsilon_h .
\end{equation}
Then, it is readily verified that the sequence $(\epsilon_h)$ converges to $\epsilon$ in $\mathcal{M}$. Thus, subtracting (\ref{fZM2ZA}) from (\ref{ZER2Cb}) gives
\begin{equation}
    \epsilon_h - \epsilon
    =
    P_h \epsilon_h - P\epsilon ,
\end{equation}
and, estimating, 
\begin{equation}
\begin{split}
    \|\epsilon_h-\epsilon\|
    & =
    \|P_h \epsilon_h- P\epsilon \|
    \\ & = 
    \|P_h (\epsilon_h-\epsilon) + (P_h-P)\epsilon \|
    \leq 
    \gamma \, 
    \|\epsilon_h-\epsilon\| 
    + 
    \|P_h-P\| \, \| \epsilon \| .
\end{split}
\end{equation}
Rearranging terms,
\begin{equation} \label{8dcSfH}
    \|\epsilon_h-\epsilon\|
    \leq
    \frac{1}{1-\gamma} \, 
    \|P_h-P\| \, \| \epsilon \| ,
\end{equation}
which shows $\epsilon_h \to \epsilon$ in $\mathcal{M}$ if $P_h \to P$ in the operator norm, as advertised. 

It therefore follows that the local stress-controlled problem is stable with respect to perturbations in the plastic-strain operator provided that the perturbations are controlled in the operator norm, which sets forth the appropriate notion of convergence of approximations. 

\subsection{Approximation of the memory kernel}

A particular case of interest concerns the approximation of the hereditary kernel $\mathbb{K}(\tau)$. Thus, suppose that the approximate plastic-strain operators are of the form
\begin{equation} \label{AwKzd6}
    P_h \epsilon(t)
    :=
    \int_{-\infty}^t
        \mathbb{C}^{-1}
        \mathbb{K}_h(t-s)
         \, \epsilon(s)
    \, ds ,
\end{equation}
for some sequence $(\mathbb{K}_h)$ of hereditary kernels. In this case, we expect that the convergence of solutions be controlled directly by the convergence of the sequence $(\mathbb{K}_h)$ in some suitable sense, to be determined. Indeed, subtracting (\ref{6dEyzV}) from (\ref{AwKzd6}), we obtain
\begin{equation} 
\begin{split}
    (P_h-P) \epsilon(t)
    & =
    \int_{-\infty}^t
        \mathbb{C}^{-1}
        \big(\mathbb{K}_h(t-s)-\mathbb{K}(t-s)\big)
         \, \epsilon(s)
    \, ds .
\end{split}
\end{equation}
Proceeding as in (\ref{8u4G3x}), we estimate
\begin{subequations} 
\begin{align}
\begin{split}
    &   
    \int_s^T
        \| \mathbb{K}_h(t-s)-\mathbb{K}(t-s) \|
        \, \frac{{w(T-t)}}{{w(T-s)}}
    \, dt
    \leq 
    \int_0^{+\infty}
        \| \mathbb{K}_h(\tau) - \mathbb{K}(\tau) \|
        \, w^{-1}(\tau)
    \, d\tau ,
\end{split}
    \\ 
\begin{split}
    &
    \int_{-\infty}^t
        \| \mathbb{K}_h(t-s)-\mathbb{K}(t-s) \|
        \, \frac{{w(T-t)}}{{w(T-s)}}
    \, ds
    \leq 
    \int_0^{+\infty}
        \| \mathbb{K}_h(\tau) - \mathbb{K}(\tau) \|
        \, w^{-1}(\tau)
    \, d\tau ,
\end{split}
\end{align}
\end{subequations}
and, by (\ref{R3M3sd}),
\begin{equation} \label{8svMvb}
    \| P_h - P \| 
    \leq
        \int_0^{+\infty}
        \| \mathbb{K}_h(\tau) - \mathbb{K}(\tau) \|
        \, w^{-1}(\tau)
    \, d\tau .
\end{equation}
Therefore, convergence of solutions is ensured if the sequence $(\mathbb{K}_h)$ of memory kernels converges to $\mathbb{K}$ in the space $\mathcal{K} := L^1((0,+\infty), L(M), w^{-1}(\tau) \, d\tau)$, which thus arises as the natural space of memory kernels and establishes a powerful link with the problem of approximation of $L^1$ functions \cite{Pinkus:1989}.

\subsection{Approximation of the relaxation spectrum} \label{ZRp3LD}

The notion that the rheology of materials arises from the superposition of internal mechanisms, each characterized by a {\sl relaxation time}, was introduced by Wiechert \cite{Wiechert:1893}, and pervades much of the theory and praxis of linear viscoelasticity. The collection of relaxation times, or {\sl relaxation spectrum}, can be finite, countable or continuous (see, \cite[\S4]{tschoegl1989}; also Kestin and Rice \cite{Kestin:1970} for a critical review). 

To elucidate this connection, begin by writing a general hereditary kernel in the form
\begin{equation} \label{eqZ2Sh}
    \mathbb{K}(\tau)
    =
    \int_{\lambda_0}^{+\infty}
        (\lambda-\lambda_0) \, 
        {\rm e}^{-\lambda \tau}
    \, d\nu(\lambda) ,
    \quad
    \tau \geq 0 ,
\end{equation}
where $1/\lambda$ is a generic relaxation time and $\nu$ is an $L(M,N)$-valued measure with support in $[\lambda_0,+\infty)$, or {\sl relaxation measure}, with $1/\lambda_0$ a cutoff relaxation time, possibly infinite. The support of $\nu$ is the {\sl relaxation spectrum}. Alternative spectral representations can be based on the Laplace transform \cite[\S4]{tschoegl1989}. Estimating and appealing to Fubini's theorem, we find
\begin{equation}
\begin{split}
    &
    \int_0^{+\infty}
        w^{-1}(\tau) \, \|\mathbb{K}(\tau)\|
    \, d\tau
    = \\ &
    \int_0^{+\infty}
        w^{-1}(\tau) \,
        \|
            \int_{\lambda_0}^{+\infty}
                (\lambda-\lambda_0) \, 
                {\rm e}^{-\lambda \tau}
            \, d\nu(\lambda) 
        \|
    \, d\tau
    \leq \\ &
    \int_0^{+\infty}
        w^{-1}(\tau) \,
        \Big(
            \int_{\lambda_0}^{+\infty}
                (\lambda-\lambda_0) \, 
                {\rm e}^{-\lambda \tau}
            \, d\|\nu\|(\lambda) 
        \Big)
    \, d\tau
    \leq \\ &
    \int_{\lambda_0}^{+\infty}
        (\lambda-\lambda_0) \, 
        \Big(
            \int_0^{+\infty}
                w^{-1}(\tau) \,
                {\rm e}^{-\lambda \tau}
            \, d\tau
        \Big)
    \, d\|\nu\|(\lambda) .
\end{split}
\end{equation}
From this estimate we conclude that
and $\mathbb{K}$ belongs to $\mathcal{K}$ provided that 
\begin{equation} \label{P4dtn6}
    (\lambda-\lambda_0) 
    \int_{\lambda_0}^{+\infty}
        w^{-1}(\tau) \,
        {\rm e}^{-\lambda \tau}
    \, d\tau
    \leq
    C < +\infty ,
\end{equation}
and the relaxation measure $\nu$ is bounded. For instance, if $w(\tau) = 1$, then we can take $\lambda_0 = 0$, corresponding to an infinite cutoff relaxation time, whence the uniform bound (\ref{P4dtn6}) is satisfied with $C=1$.

Suppose now that the relaxation spectrum is approximated by a sequence $(\nu_h)$ of relaxation measures, in turn generating a sequence of hereditary kernels
\begin{equation} \label{gTUf4H}
    \mathbb{K}_h(\tau)
    =
    \int_{\lambda_0}^{+\infty}
        (\lambda-\lambda_0) \, 
        {\rm e}^{-\lambda \tau}
    \, d\nu_h(\lambda) ,
    \quad
    \tau \geq 0 .
\end{equation}
Suppose that the sequence $(\nu_h)$ converges weakly to $\nu$ in the sense of measures. Then, 
\begin{equation}
    \mathbb{K}_h(\tau) - \mathbb{K}(\tau)
    =
    \int_{\lambda_0}^{+\infty}
        (\lambda-\lambda_0) \,
        {\rm e}^{-\lambda \tau}
    \, d(\nu_h-\nu)(\lambda) 
    \to
    0 .
\end{equation}
Hence $\mathbb{K}_h(\tau)$ converges to $\mathbb{K}(\tau)$ pointwise, and
\begin{equation} \label{XDtE34}
    \int_0^{+\infty}
        \| \mathbb{K}_h(\tau) - \mathbb{K}(\tau) \|
        \, w^{-1}(\tau)
    \, d\tau 
    \to
    0 ,
\end{equation}
by dominated convergence \cite[Theorem~1.50]{AdamsFournier:2003}. 

We note that the preceding analysis based on a relaxation measure extends the classical treatment based on continuous densities \cite[\S4]{tschoegl1989} and unifies the treatment of discrete and continuous spectra. It also elucidates the appropriate notion of approximation and convergence of spectral representations, namely, weak convergence of the relaxation measure.  

\begin{example}[Approximation property of Prony series] \label{ze9Gs3} {\rm
{\sl Prony series}, see Example~\ref{t3QNXc}, are widely used to represent and approximate memory kernels, and to reduce experimental data \cite{Qvale:2004, Knauss:2007, Zhao:2007}. The approximation properties of Prony series can be established as follows. The Prony series model (\ref{yo7cHo}) corresponds to approximate relaxation measures of the form
\begin{equation} \label{m9nwCt}
    \nu_h 
    = 
    \sum_{i=1}^{N_h} \mathbb{C}_i \delta_{\lambda_i} ,
\end{equation}
where $\mathbb{C}_i \in L(E,F)$ and $\delta_{\lambda_i}$ is the Dirac measure centered at $\lambda_i > \lambda_0$. Discrete measures of the form (\ref{m9nwCt}) are weakly dense in the space of measures \cite[\S8.1.5]{Bogachev:2007}. Thus, it is always possible to find an approximating sequence $(\nu_h)$ converging weakly to a given relaxation measure $\nu$. By the preceding analysis, the corresponding sequence of Prony series $\mathbb{K}_h$, eq.~(\ref{gTUf4H}), converges to the hereditary kernel $\mathbb{K}$ generated by $\nu$, eq.~(\ref{eqZ2Sh}), in $\mathcal{K}$, which establishes the approximating property of Prony series. An alternative--and equivalent--interpretation of this approximation property is to observe that Dirac measures are extreme points in the space of measures and appeal to the Krein-Milman theorem \cite[Thm.~3.23]{Rudin:1991}. 
}\hfill$\square$
\end{example}

\section{History representation}

In view of the {\sl Hilbert-space structure} of the spaces $\mathcal{M}$ and $\mathcal{N}$, it is natural to represent strain and stress evolutions with reference to a {\sl basis}. Evidently, the choice of basis depends critically on the weighting function $w(t)$, which in turn depends on the decay properties of the hereditary kernel $\mathbb{K}(\tau)$, see Theorem~\ref{z4nJex}. Remarkably, the elucidation of this question, and other seemingly disconnected representations, such as internal variables (see \cite{Horstemeyer:2010} for a historical overview), leads directly to the theory of {\sl Hilbert-Schmidt operators} \cite{Conway:1990, Rudin:1991, Brezis:2010} and {\sl $N$-widths} in Hilbert spaces \cite{Pinkus:1985}, as we shall see next.

\subsection{Basis representation} \label{cN43Ba}

In order to standardize the representation of evolutions, it proves convenient to adopt a {\sl history representation}. For a fixed material point, the {\sl past local histories} of strain and stress up to time $t$ are the functions 
\begin{equation}
    \epsilon_t(\tau) = \epsilon(t-\tau) ,
    \quad
    \sigma_t(\tau) = \sigma(t-\tau) , 
    \quad
    \tau \geq 0 . 
\end{equation}
In terms of histories, for fixed $t$ the hereditary law (\ref{ZER2Cb}) becomes
\begin{equation} \label{bgmUJR}
    \sigma_t(\tau)
    =
    \mathbb{C} \, ( I - S ) \, \epsilon_t(\tau) ,
\end{equation}
where the plastic-strain history operator 
\begin{equation} \label{yCn5JN}
    S \, \epsilon_t(\tau)
    =
    \int_\tau^{+\infty} 
        \mathbb{C}^{-1} \mathbb{K}(\rho-\tau) \, 
        \epsilon_t(\rho) 
    \, d\rho , 
\end{equation}
maps local histories of strain $\epsilon_t$ in the space of local strain histories $\mathcal{H}$ $:=$ $L^2((0,+\infty), M, \mu)$ to local histories of plastic strain $S \, \epsilon_t$ also in $\mathcal{H}$.

\begin{definition}[Spaces of local stress and strain histories] \label{D3jOho
} The space $\mathcal{H}$ of local strain histories is the weighted time-dependent Lebesgue space $L^2((0,+\infty), M, \mu)$, and the space $\mathcal{F}$ of local stress evolutions is the weighted time-dependent Lebesgue space $L^2((0,+\infty), N, \mu)$, both with the usual metrization (see, e.~g., \cite[\S5.9.2]{Evans:1998}). As Hilbert spaces, $\mathcal{F} = \mathcal{H}^*$ and $\mathcal{H} = \mathcal{F}^*$, with Riesz mapping given by the timewise application of Hooke's law $\sigma_t(\tau) = \mathbb{C} \epsilon_t(\tau)$. 
\end{definition}

Let $(\phi_k)_{k=1}^\infty$ be an orthonormal basis of $\mathcal{H}$. Then, strain histories admit the representation
\begin{equation} \label{A6y3Fr}
    \epsilon_t(\tau)
    =
    \sum_{k=1}^\infty 
    ( \epsilon_t, \phi_k )
    \, \phi_k(\tau) .   
\end{equation}
Inserting representation (\ref{A6y3Fr}) into (\ref{yCn5JN}), gives
\begin{equation} \label{AJ3FYj} 
    S \epsilon_t(\tau) 
    = 
    \sum_{k=1}^\infty ( \epsilon_t, \phi_k ) \, \psi_k(\tau) ,
\end{equation}
where 
\begin{equation}
    \psi_k(\tau)
    =
    \int_0^{+\infty}
        \mathbb{C}^{-1} \mathbb{K}(\rho) \, \phi_k(\tau + \rho)
    \, d\rho ,
\end{equation}
provided that the series converges, in which case (\ref{AJ3FYj}) is known as a {\sl Hilbert-Schmidt representation} of $S$. 

We observe from representation (\ref{A6y3Fr}) that the variables 
\begin{equation} \label{hgBWG1}
    q_k(t) = ( \epsilon_t, \phi_k )
\end{equation}
record sufficient information to reconstruct the entire history of strain, and can therefore be regarded as {\sl history variables}. In addition, we see from (\ref{bgmUJR}) and (\ref{AJ3FYj}) that the variables $q_k(t)$, together with $\epsilon(t)$, fully characterize the instantaneous state of the material at time $t$ and, therefore, can be interpreted as {\sl internal variables}. Internal variable representations of materials with memory date back to the work of C.~Eckart \cite{Eckart:1940, Eckart:1948}, Meixner \cite{Meixner:1953}, Biot \cite{Biot:1954} and Ziegler \cite{Ziegler:1958} and were formalized further by Coleman and Gurtin \cite{Coleman:1967} and others \cite{Rice:1971, lubliner1973a, Rice:1975}. The physical foundations underlying the memory-functional and the internal-variable formalisms were critically reviewed by Kestin and Rice \cite{Kestin:1970}. We note that the set of internal variables (\ref{hgBWG1}) depends on the choice of basis and, therefore, is not unique, with different sets related by a change of basis \cite{lubliner1973a}.

\begin{example}[Laguerre polynomials] \label{eFPcp3} {\rm
Suppose the weight $w(t)$ is of the exponential form (\ref{AJN7Uj}), denoting an exponentially fading memory. We recall that the {\sl Laguerre polynomials}
\begin{equation}
    L_k(x) 
    = 
    \frac{{\rm e}^x}{k!}
    \frac{d^k}{dx^k}
    ({\rm e}^{-x} x^k) 
    =
    \sum_{p=0}^k 
    \binom{k}{p} \frac{(-1)^p}{p!} x^p,
    \quad
    k = 0,1,\dots ,
\end{equation}
define an orthonormal basis of $L^2((0,+\infty), {\rm e}^{-x} \, dx)$. Therefore, the renormalized Laguerre polynomials
\begin{equation}
    \phi_k(\tau) = \lambda_0^{1/2} \, L_{k-1}(\lambda_0 \, \tau) 
    \quad
    k = 1,2,\dots ,
\end{equation}
define an orthonormal basis in the space $L^2((0,+\infty), w(\tau) \, d\tau)$ and $e_i \, \phi_k$, with $(e_i)_{i=1}^d$, $d=(n+1)n/2$ an orthonormal basis of $E$, defines a basis in the space $\mathcal{H}$ of local strain histories. Because of the Hilbert-space structure of the spaces of local histories, the completeness and orthonormality of the Laguerre polynomials render them well-suited for representing and sampling past local histories of viscoelastic materials with exponential decay. 
} \hfill$\square$
\end{example}

\subsection{Compactness and Hilbert-Schmidt properties}

Assuming that the plastic-strain history operator $S$ admits a Hilbert-Schmidt representation (\ref{AJ3FYj}), it is natural to seek approximations of the form
\begin{equation} \label{zJ8aMM}
    S_N \epsilon_t(\tau) 
    = 
    \sum_{k=1}^N ( \epsilon_t, \phi_k )  \, \psi_k(\tau) ,
\end{equation} 
where $(\phi_k)_{k=1}^N$ and $(\psi_k)_{k=1}^N$ are functions in $\mathcal{H}$, to be determined. Operators of the form (\ref{zJ8aMM}) are said to be of {\sl finite-rank}. The central question is, then, under what conditions and in what sense the sequence of operators $(S_N)$ converges to $S$. An ancillary question, taken up in Section~\ref{r4ElCe}, concerns the optimal choice of basis functions $(\phi_k)_{k=1}^N$ and $(\psi_k)_{k=1}^N$ for given rank $N$. We note that this question is equivalent to that of determining the best set of internal variables of a given dimension, see discussion at the end of Section~\ref{cN43Ba}.

A bounded operator on a Hilbert space is said to have the {\sl approximation property} if it can be approximated by finite-rank operators of the form (\ref{zJ8aMM}) in the sense of the operator norm. A classical result of analysis (see, e.~g., \cite[Cor.~6.2.]{Brezis:2010}) is that a bounded operator has the approximation property if and only if it is {\sl compact}. An important class of compact operators is the class of {\sl Hilbert-Schmidt operators}. An application of Hilbert-Schmidt theory, see Appendix~\ref{rQw4p3}, to (\ref{yCn5JN}) yields the following compactness result.

{Henceforth, we restrict attention throughout to histories of finite duration $T>0$, and redefine the space of histories accordingly as $\mathcal{H}$ $:=$ $L^2((0,T), M, \mu)$.}

\begin{thm}[Compactness] \label{mCpR5D}
Assume:
\begin{itemize}
\item[i)] (Elastic stability). $\mathbb{C} \in L(\mathbb{R}^{{n}\times {n}}_{\rm sym})$, $\mathbb{C}^T = \mathbb{C}$, $\mathbb{C} > 0$.
\item[ii)] (Hilbert-Schmidt). There is a positive, continuous, non-increasing, square-integrable weighting function $w(\tau) : [0,+\infty) \to \mathbb{R}$, normalized to $w(0) = 1$,  satisfying the semigroup condition (\ref{ttm59H}), and $\gamma > 0$ such that
\begin{equation} \label{DcjG6H}
    \int_0^{+\infty}
        \| \mathbb{K}(\tau) \|^2
    \, w^{-1}(\tau) \, d\tau
    \leq
    \gamma^2 ,
\end{equation}
where $\| \cdot \|$ denotes the operator norm.
\end{itemize}
Then, the plastic-strain history operator $S$, eq.~(\ref{yCn5JN}), is Hilbert-Schmidt, hence compact, in $\mathcal{H}$ and
\begin{equation} \label{4tgDYk}
    \| S \|_{\rm HS} \leq \gamma {\sqrt{T}} 
\end{equation}
\end{thm}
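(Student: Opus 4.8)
The plan is to realize $S$ as an integral operator on the Hilbert space $\mathcal{H} = L^2((0,T),M,\mu)$ with a matrix-valued kernel, and to establish the $L^2$-integrability of that kernel which characterizes Hilbert--Schmidt operators; the stated compactness is then automatic, since every Hilbert--Schmidt operator is compact (cf.\ Appendix~\ref{rQw4p3}). First I would rewrite (\ref{yCn5JN}), in the spirit of (\ref{P4FxNH}) but now over the bounded history horizon, as
\[
    S\epsilon_t(\tau)
    =
    \int_0^T k(\tau,\rho)\,\epsilon_t(\rho)\, d\mu(\rho) ,
    \qquad
    k(\tau,\rho)
    =
    \mathbb{C}^{-1}\mathbb{K}(\rho-\tau)\, H(\rho-\tau)\, w^{-1}(\rho) ,
\]
where $H$ is the Heaviside function (encoding the lower limit $\tau$ in (\ref{yCn5JN})) and $d\mu(\rho) = w(\rho)\, d\rho$ is the weighted history measure. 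All matrix norms below are the Hilbert--Schmidt norm induced by the $\mathbb{C}$-inner product on the finite-dimensional space $M$, which is equivalent to the operator norm $\|\mathbb{K}\|$ figuring in assumption (ii).

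Next I would invoke the standard fact that an integral operator of this type is Hilbert--Schmidt precisely when its kernel is square-integrable for the product measure, with
\[
    \|S\|_{\rm HS}^2
    =
    \int_0^T\!\!\int_0^T \|k(\tau,\rho)\|_{\rm HS}^2\, d\mu(\tau)\, d\mu(\rho)
    =
    \int_0^T\!\!\int_\tau^T \|\mathbb{C}^{-1}\mathbb{K}(\rho-\tau)\|_{\rm HS}^2\, w^{-1}(\rho)\, w(\tau)\, d\rho\, d\tau ,
\]
the second equality following by substituting $k$, using $H^2=H$, and cancelling one power of $w(\rho)$ against $w^{-2}(\rho)$. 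The change of variables $\rho = \tau+\theta$ recasts the inner integral as $\int_0^{T-\tau}\|\mathbb{C}^{-1}\mathbb{K}(\theta)\|_{\rm HS}^2\, w^{-1}(\tau+\theta)\, w(\tau)\, d\theta$, and the semigroup inequality (\ref{ttm59H}), in the form $w^{-1}(\tau+\theta)\, w(\tau) \le w^{-1}(\theta)$, bounds its integrand by $\|\mathbb{C}^{-1}\mathbb{K}(\theta)\|_{\rm HS}^2\, w^{-1}(\theta)$. Enlarging the $\theta$-domain to $(0,+\infty)$ and applying hypothesis (ii), eq.~(\ref{DcjG6H}), bounds the inner integral by $\gamma^2$ uniformly in $\tau$, so that $\|S\|_{\rm HS}^2 \le \int_0^T \gamma^2\, d\tau = \gamma^2 T$, which is (\ref{4tgDYk}); compactness of $S$ then follows since $S$ is Hilbert--Schmidt.

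I expect the only real subtlety to be the weight bookkeeping: the factor $w(\tau)$ carried by the history measure $d\mu$ must combine, through the semigroup property, with the factor $w^{-1}(\rho)$ built into the kernel so as to reconstitute exactly the weighted integral of $\|\mathbb{K}\|^2$ that hypothesis (ii) controls. This is the quadratic counterpart of the $L^1$-estimates (\ref{8u4G3x}) used in Theorem~\ref{z4nJex}; the semigroup condition on $w$ enters in precisely the same place and to precisely the same end, and the computation is otherwise routine. Two final remarks. First, the restriction to a finite history horizon $T$ is genuinely needed here, as it is the outer integral $\int_0^T d\tau$ that must be finite; over $(0,+\infty)$ the identical computation diverges and $S$ need not be Hilbert--Schmidt, though it may still be compact under stronger decay of $\mathbb{K}$. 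Second, if one insists on reading $\|\mathbb{K}(\tau)\|$ in (ii) as the operator norm rather than the $\mathbb{C}$-Hilbert--Schmidt norm of $\mathbb{C}^{-1}\mathbb{K}(\tau)$, the bound (\ref{4tgDYk}) acquires a harmless factor depending only on $\dim M$.
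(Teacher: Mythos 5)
Your proposal is correct and follows essentially the same route as the paper: realize $S$ as an integral operator with kernel $\mathbb{C}^{-1}\mathbb{K}(\rho-\tau)H(\rho-\tau)w^{-1}(\rho)$, apply the square-integrability criterion (\ref{9dqPFC}), substitute $\theta=\rho-\tau$, use the semigroup inequality to reduce to the bound (\ref{DcjG6H}), and integrate the constant $\gamma^2$ over $(0,T)$. Your added care about the distinction between the operator norm and the $\mathbb{C}$-induced Hilbert--Schmidt norm on $M$ (and the resulting dimension-dependent constant) is a fair point that the paper glosses over, but it does not change the substance of the argument.
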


\begin{proof} \normalsize 
{Applying the test (\ref{9dqPFC}) to $S$ over $\mathcal{H}$, we obtain
\begin{equation} 
\begin{split}
    \| S \|_{\rm HS}^2
    & =
    \int_0^T
        \Big(
            \int_\tau^T
                \| \mathbb{K}(\rho-\tau) \|^2
                w^{-1}(\rho)
            \, d\rho
        \Big)
    w(\tau) \, d\tau .
\end{split}
\end{equation}
Effecting the change of variables $\theta = \rho - \tau$, we equivalently have
\begin{equation} 
\begin{split}
    \| S \|_{\rm HS}^2
    & =
    \int_0^T
        \Big(
            \int_0^{T-\tau}
                \| \mathbb{K}(\theta) \|^2
                w^{-1}(\theta+\tau)
            \, d\theta
        \Big)
    w(\tau) \, d\tau .
\end{split}
\end{equation}
Finally, using the semigroup property (\ref{ttm59H}), assumption (ii), and saturating the bound, we find
\begin{equation} 
\begin{split}
    \| S \|_{\rm HS}^2
    & \leq
    \int_0^T
        \Big(
            \int_0^\infty
                \| \mathbb{K}(\theta) \|^2
                w^{-1}(\theta)
            \, d\theta
        \Big)
    \, d\tau 
    \leq
    \gamma^2 T ,
\end{split}
\end{equation}
whence (\ref{4tgDYk}) follows.} 
\end{proof}

{Remarkably, compactness hinges critically on the boundedness of the time domain, as expected from Volterra operators. In particular, the Hilbert-Schmidt and compactness properties of $S$ fail if $T=+\infty$.}

In the special case of $\mathbb{K}(\tau)$ essentially bounded, an application of H\"older's inequality gives
\begin{equation} 
    \int_0^{+\infty}
        \| \mathbb{K}(\tau) \|^2
    \, w^{-1}(\tau) \, d\tau
    \leq
    \| \mathbb{K} \|_\infty
    \int_0^{+\infty}
        \| \mathbb{K}(\tau) \|
    \, w^{-1}(\tau) \, d\tau ,
\end{equation}
{and the condition of fading memory (\ref{jVy9Vz}) implies the Hilbert-Schmidt condition (\ref{DcjG6H}).} 
For instance, suppose that the weight $w(t)$ is of the form (\ref{AJN7Uj}) and $\mathbb{K}(\tau) \sim {\rm e}^{-\lambda \tau}$. Then, the left-hand side of (\ref{jVy9Vz}) evaluates to $\sim 1/(\lambda-\lambda_0)$, which requires that $\lambda > \lambda_0$; whereas the left-hand side of (\ref{DcjG6H}) evaluates to $\sim 1/(2\lambda-\lambda_0)$, which requires the weaker condition $\lambda > \lambda_0/2$.

\subsection{Best rank-$N$ approximation of a history operator} \label{r4ElCe}

Suppose that the plastic-strain history operator $S$ is compact, see Theorem~\ref{mCpR5D}. Then, by the approximation property, it follows that $S$ can be approximated by finite-rank operators of the form (\ref{zJ8aMM}) in the operator norm. However, this approximation property begs the question of which is the best approximation of a given rank, i.~e., the rank-$N$ operator $S_N$ such that 
\begin{equation} \label{7rPf4y}
    \| S - S_N \| = \inf_{\operatorname{\rm rank} T \leq N} \| S - T \| ,
\end{equation}
if any. For Hilbert-Schmidt operators, this problem was fist formulated and solved by Schmidt in 1907 \cite{Schmidt:1907}. We observe that the optimality of the operator $S_N$, in the sense (\ref{7rPf4y}), implies an optimal error bound (\ref{8dcSfH}) for the corresponding histories.

Remarkably, problem (\ref{7rPf4y}) falls squarely within the theory of $N$-widths \cite{Pinkus:1985}. Appendix~\ref{NyP5uW} summarizes some basic aspects of the theory for ease of reference. In the present setting, a central result of the theory, Theorem~\ref{Mj6553}, is that the optimal rank-$N$ approximation $S_N$ of the plastic-strain history operator $S$ can be characterized in terms of eigenvalues and eigenfunctions of the operators ${S}^*{S}$ and ${S}{S}^*$, where \begin{equation}
    S^*\epsilon_t(\tau)
    =
    \int_0^\tau
        \mathbb{C}^{-1}
        \mathbb{K}(\tau-\rho) \, \epsilon_t(\rho)
        \, \frac{w(\rho)}{w(\tau)} 
    \, d\rho ,
\end{equation}
is the {\sl adjoint} plastic-strain operator in $\mathcal{M}$.

If $S$ is compact, $S^*S$ and $S S^*$ are compact and self-adjoint operators. It then follows that $S^*S$ and $S S^*$ define a sequence of positive real eigenvalues $(\mu_k)_{k=1}^{\operatorname{\rm rank}S}$ such that the sequence is non-increasing and, if $\operatorname{\rm rank}S=\infty$, $\lim_{k\to\infty} \mu_k = 0$. In addition, with $N \leq \operatorname{\rm rank} S$ the eigenvectors $(\phi_k)_{k=1}^N$ of ${S}^*{S}$ and the eigenvectors $(\psi_k)_{k=1}^N$ of ${S}{S}^*$, related by $\psi_k = {S} \phi_k$, define the best rank-$N$ approximation (\ref{zJ8aMM}) of ${S}$, in the sense of (\ref{7rPf4y}).

\begin{example}[Standard linear-isotropic solid] \label{PBoGIl} {\rm 
Suppose, as in Example~\ref{eFPcp3}, that the weight $w(t)$ is of the exponential form (\ref{AJN7Uj}), corresponding to exponentially fading memory. Consider a material known to be isotropic. Then a joint modal decomposition of $\mathbb{C}$ and $\mathbb{K}(\tau)$ decouples the hereditary law into $d=n(n+1)/2$ independent scalar hereditary laws. Suppose that, for each mode, the modal elastic moduli and relaxation kernel are of the standard linear solid form
\begin{equation} \label{LASldN}
    \mathbb{C} = C_0 + C_1 ,
    \quad
    \mathbb{K}(\tau)
    =
    C_1 \lambda_1 \, {\rm e}^{-\lambda_1 \tau} ,
\end{equation}
with $\lambda_1 > \lambda_0$, which is a special case of the Maxwell-Wiechert model, \S\ref{t3QNXc}, with $N=1$. Suppose that the strain histories of interest are supported in the interval $[0,T]$. Then, a simple 
calculation shows that the functions
\begin{equation} \label{6jblrZ}
    \phi_k(\tau)
    =
    c_k \, {\rm e}^{-\nicefrac{1}{2}(\lambda_1-\lambda_0) \tau}
    \sin(\omega_k\tau) ,
    \quad
    k=1,2,\dots ,
\end{equation}
with
\begin{equation}
    c_k
    =
    \Big(\frac{2\omega_k^2}{\lambda_1(\lambda_1^2+4\omega_k^2)}\Big)^{-1/2} ,
    \quad
    \omega_k = \frac{k\pi}{T} ,
\end{equation}
are orthonormal eigenfunctions of $S^*S$ with eigenvalues
\begin{equation} \label{BmZWu1}
    \mu_k
    =
    \Big( \frac{C_1}{C_0+C_1}\Big)^2 \,
    \frac{4 \lambda_1^2}{(\lambda_1-\lambda_0)^2+\omega_k^2} ,
\end{equation}
provided that $\lambda_0 < \lambda_1$. It then follows from Theorem~\ref{Mj6553} that the functions $\phi_k(\tau)$, eq.~(\ref{6jblrZ}), and the corresponding functions $\psi_k(\tau) = S \phi_k(\tau)$, supply the most accurate finite-rank representations, eq.~(\ref{zJ8aMM}), of the plastic-strain history operator for the standard-linear solid. Correspondingly, for a given number of variables, the history variables (\ref{hgBWG1}) contain the most information pertaining to the past history of strain. 
}\hfill$\square$
\end{example}

\subsection{Best rank-$N$ approximation of a class of history operators} 

Suppose now that we wish to identify plastic-strain history subspaces of $\mathcal{H}$ that afford the best possible representation for an entire {\sl class} $\mathcal{S}$ of compact plastic-strain history operators. For instance, whereas the plastic-strain history operator of a particular material may not be known, it may be possible to surmise some of its properties {\sl a priori}, e.~g., bounds on decay rates, which defines classes of operators to which the unknown operator is sure to belong. 

This problem again falls within the framework of $N$-widths \cite{Pinkus:1985}. Specifically, we wish to minimize the $N$-width of the entire class $\mathcal{S}$, which is now given by the largest $s$-number $s_{N+1}({T})$ among all plastic-strain operators in the class, namely,
\begin{equation}
\begin{split}
    &
    d_N(\mathcal{S} \, B_1(\mathcal{H}))
    =
    \sup_{{T} \in \mathcal{S}} \mu_{N+1}^{1/2}({T}{T}^*) 
    = 
    \sup_{{T} \in \mathcal{S}} 
    \left[
    \inf_{\operatorname{\rm dim} {V} \leq N}
    \sup_{( \eta,\, {V})=0}
    \frac{( {T}{T}^*\eta,\, \eta)}{( \eta, \eta)} 
    \right]^{1/2} ,
\end{split}
\end{equation}
in terms of the Rayleigh-Ritz characterization of the $(N+1)$ eigenvalue of ${T}{T}^*$, see (\ref{gp73Mt}). 

Suppose, specifically, that there is a bounding Hilbert-Schmidt operator ${S}$ such that
\begin{equation} \label{qgIp90}
    \mathcal{S}
    =
    \{
        {T} \in \mathcal{B}(\mathcal{H}) \, : \, 
        ({T} \eta, {T} \eta) \leq ({S} \eta, {S} \eta) , \;
        \forall \eta \in \mathcal{H}
    \} .
\end{equation}
By (\ref{3xGVzH}), it follows immediately that every ${T} \in \mathcal{S}$ is Hilbert-Schmidt, hence compact. Alternatively, let $(\mu_k)_{k=1}^{\operatorname{\rm rank}S}$, $(\phi_k)_{k=1}^{\operatorname{\rm rank}S}$ and $(\psi_k)_{k=1}^{\operatorname{\rm rank}S}$, with $\psi_k = {S} \phi_k$, define the best rank-$N$ approximation of ${S}$, in the sense of $N$-widths. Then,
\begin{equation} \label{iHjYDY}
    \mathcal{S}
    =
    \{
        {T} \in \mathcal{B}(\mathcal{H}) \, : \, 
        ({T} \phi_k,\, {T} \phi_k) \leq \mu_k , \;
        k=1,\dots,\operatorname{\rm rank}S
    \} ,
\end{equation}
which characterizes the class of plastic-strain history operators in terms of the Hilbert-Schmidt representation of the bounding operator. 

It is readily verified from (\ref{qgIp90}) that the set $\mathcal{S}$ is convex. Suppose, in addition that $\mathcal{S}$ is a compact subset of $\mathcal{B}(\mathcal{H})$. For instance, we may consider operators with hereditary kernel of the form (\ref{eqZ2Sh}) in terms of finite spectral measures with compact support. Then, by the minimax theorem \cite{Sion:1958} we have
\begin{equation} \label{aLTm6N}
    d_N(\mathcal{S} \, B_1(\mathcal{H}))
    =
    \left[
    \inf_{\operatorname{\rm dim} {V} \leq N}
    \sup_{( \eta,\, {V})=0}
    \sup_{{T} \in \mathcal{S}} 
    \frac{( {T}\eta,\, {T}\eta)}{( \eta, \eta)} 
    \right]^{1/2} ,
\end{equation}
In addition, ${S} \in \mathcal{S}$ and 
\begin{equation} \label{BbpV4j}
    \sup_{{T} \in \mathcal{S}} 
    ( {T}\eta,\, {T}\eta)
    =
    ({S} \eta, {S} \eta) , 
    \quad
    \forall \eta \in \mathcal{H} ,
\end{equation}
whence it follows that the representations
\begin{equation} 
    {T}_N \eta 
    = 
    \sum_{k=1}^N c_k ( \eta, \phi_k ) \, \phi_k ,
    \quad
    0 \leq c_k \leq \mu_k ,
    \quad
    N = 1,\dots, \operatorname{\rm rank}{S} ,
\end{equation} 
are optimal for the entire set $\mathcal{S}$ in the sense of $d_N(\mathcal{S} \, B_1(\mathcal{H}))$.

\begin{example}[Bounding standard linear-isotropic solid] {\rm Suppose that the standard linear-isotropic solid of Example~\ref{PBoGIl} is known to supply a bounding plastic-strain history operator ${S}$ for a class $\mathcal{S}$ of materials, in the sense of (\ref{qgIp90}). Suppose that ${T}$ is of the form (\ref{yCn5JN}) with spectral representation (\ref{eqZ2Sh}). A straightforward calculation using (\ref{6jblrZ}) gives
\begin{equation}
    {T} \phi_k
    =
    \frac{c_k}{C_0+C_1} \, 
    {\rm e}^{-\frac{1}{2} (\lambda_1-\lambda_0)\tau  } 
    (A_k \, \cos(\omega_k\tau) + B_k \, \sin(\omega_k\tau))
\end{equation}
where
\begin{subequations}
\begin{align}
    &
    A_k
    =
    \int_{\lambda_0}^{+\infty}
        \frac
        {
            4 \omega_k 
        }
        {
            (2 \lambda + \lambda_1 - \lambda_0)^2+4 \omega_k ^2
        } 
        (\lambda-\lambda_0) 
    \, d\nu(\lambda) ,
    \\ &
    B_k
    =
    \int_{\lambda_0}^{+\infty}
        \frac
        {
            2 (2 \lambda + \lambda_1 - \lambda_0)
        }
        {
            (2 \lambda + \lambda_1 - \lambda_0)^2+4 \omega_k^2
        }
        (\lambda-\lambda_0) \, 
    \, d\nu(\lambda) ,
\end{align}
\end{subequations}
and
\begin{equation} \label{PmBYnH}
    ( T \phi_k,\, T\phi_k)
    =
    \frac{c_k^2}{C_0+C_1}
    \frac
    {
        2 \omega_k^2 
        \left(A_k^2+B_k^2\right)
        +
        A_k^2 \lambda_1^2
        +
        2 A_k B_k
        \lambda_1 \omega_k 
    }
    {
        \lambda_1(\lambda_1^2+4 \omega_k^2)
    } .
\end{equation}
Then, ${T}$ is in $\mathcal{S}$ if conditions (\ref{iHjYDY}) are satisfied, with $\mu_k$ as in (\ref{BmZWu1}). We note that, by the Krein-Milman theorem \cite[3.23]{Rudin:1991}, the conditions (\ref{PmBYnH}) need only verified for Dirac spectral measures $\nu(\lambda)$. 
}\hfill$\square$
\end{example}

\section*{Acknowledgements}

The financial support of the {\sl Centre Internacional de Mètodes Numèrics a l'Enginyeria} (CIMNE) of the {\sl Universitat Politecnica de Catalunya} (UPC), Spain, through the {\sl UNESCO Chair in Numerical Methods in Engineering} is gratefully acknowledged.

\begin{appendix}

\section{Compact and Hilbert-Schmidt operators} \label{rQw4p3}

We recall that a bounded operator ${S} \in \mathcal{B}(H)$ on a Hilbert space $H$ is compact if and only if it can be approximated by finite-rank operators of the form (\ref{zJ8aMM}), with the sum converging in the operator norm. Thus, on a Hilbert space, any compact operator is a limit of finite-rank operators, i.~e., the class of compact operators is the closure of the set of finite-rank operators in the norm topology, which is referred to as the approximation property \cite[Cor.~6.2.]{Brezis:2010}, whence it also follows that compact operators are necessarily bounded (see, e.~g.,  \cite[\S4.16]{Rudin:1991}). Thus, it follows that finite-rank approximation of the form (\ref{zJ8aMM}) converge, in the operator norm, if and only if $P$ is compact.

An important class of compact operators is the class of {\sl Hilbert-Schmidt operators}. We recall that ${S} \in \mathcal{B}(H)$ is a Hilbert-Schmidt operator if \cite[Ex.~IX.2.19]{Conway:1990}
\begin{equation} \label{3xGVzH}
    \| {S} \|_{\rm HS} 
    := 
    \Big( \sum_{k=1}^\infty \| {S} e_k \|^2 \Big)^{1/2}
    < +\infty ,
\end{equation}
where $\|{S}\|_{\rm HS} \geq \|{S}\|$ is the {\sl Hilbert-Schmidt norm} of ${S}$ and $(e_i)_{k=1}^\infty$ is an orthonormal basis of $H$. It is readily verified that the definition (\ref{3xGVzH}) is independent of the choice of basis. A classical result from analysis is that Hilbert-Schmidt operators are indeed compact \cite[Ex.~4.15]{Rudin:1991}. In particular, suppose that $H = L^2((a,b),\mu)$ and ${S} \in \mathcal{B}(H)$ is an integral operator with kernel $k(s,t)$, i.~e.,
\begin{equation}
    {S}\xi(t) = \int_a^b k(t,s) \xi(s) \, d\mu(s) ,
\end{equation}
for $\xi \in H$. Suppose, in addition, that $k(s,t) \in L^2((a,b)^2,\mu\times\mu)$. Then \cite[Ex.~IX.2.19]{Conway:1990}, \cite[Ex.~4.15]{Rudin:1991},
\begin{equation} \label{9dqPFC}
    \| {S} \|_{\rm HS}^2
    =
    \int_a^b \int_a^b 
        | k(t,s) |^2
    \, d\mu(t) \,d\mu(s) 
    <
    +\infty ,
\end{equation}
and ${S}$ is Hilbert-Schmidt. 

\section{$N$-widths}
\label{NyP5uW}

Recall that the Kolmogoroff $N$-width of a subset $A$ of a Hilbert space $H$ is \cite[Def.~\S I.1]{Pinkus:1985}
\begin{equation} \label{c6fx6F}
    d_N(A; H) 
    = 
    \inf_{\operatorname{\rm dim} V = N} \,
    \sup_{\xi \in A} \inf_{\eta \in {V}} \| \xi - \eta \| ,
\end{equation}
where the infimum is taken over all $N$-dimensional subspaces ${V}$ of $H$. The $N$-width of $A$ measures how well $A$ can be approximated by $N$-dimensional subspaces of $H$. If
\begin{equation} \label{9PJ7Wx}
    d_N(A; H) = \sup_{\xi \in A} \inf_{\eta \in H_N} \| \xi - \eta \| 
\end{equation}
for some subspace $H_N$ of dimension at most $N$, then $H_N$ is an optimal subspace for $d_N (A; X)$. Since 
\begin{equation} 
\begin{split}
    &
    \inf_{\operatorname{\rm rank} S \leq N}\| {T} - S\| 
    = 
    \inf_{\operatorname{\rm rank} S \leq N}
    \sup_{\|\xi\|\leq 1} 
    \|({T} - S)\xi\|
    = \\ &
    \inf_{\operatorname{\rm dim} V = N} \,
    \sup_{\|\xi\|\leq 1} 
    \inf_{\eta\in {V}}
    \|{T} \xi - \eta\|
    =
    d_N({T}B_1(H)) ,
\end{split}
\end{equation}
where $B_1(H) = \{ \xi \in H \, : \, \| \xi \| \leq 1 \}$ is the unit ball of $H$ and 
\begin{equation}
    {T} B_1(H)
    = 
    \{ {T} \xi \, : \, \| \xi \| \leq 1 \} ,
\end{equation}
it follows that the best-approximation criterion (\ref{7rPf4y}) is equivalent to minimizing the Kolmogoroff $N$-width of ${T}B_1(H)$. If
\begin{equation} \label{Y4BfXS}
    \inf_{\operatorname{\rm rank} S \leq N}\| {T} - S\| 
    =
    \| {T} - {T}_N \| 
\end{equation}
for some operator ${T}_N$ of rank at most $N$, then ${T}_N$ is an optimal operator for $d_N (A; X)$.

In Hilbert spaces, Kolmogoroff $N$-widths can be characterized in terms of eigenvalues of the self-adjoint, non-negative compact operators ${T}^*{T}$ and ${T}{T}^*$, where ${T}^*$ is the adjoint operator with the property that
\begin{equation}
    ({T}\xi,\eta) = (\xi,{T}^*\eta) ,
\end{equation}
for all $\xi$, $\eta \in H$. We recall \cite[Theorem 2.1]{Pinkus:1985} that, if $S\in \mathcal{B}(H)$ is a compact, non-negative self-adjoint operator, then there is a sequence of non-zero real eigenvalues $(\mu_k)_{k=1}^N$, with $N$ the rank of $S$, such that the sequence $(|\mu_k|)_{k=1}^N$ is non-increasing and, if $N=\infty$, $\lim_{i\to\infty} \mu_k = 0$. Furthermore, if every eigenvalue is repeated in the sequence according to its multiplicity, then there exists a sequence $(\phi_k)_{k=1}^N$ of corresponding eigenfunctions that form an orthonormal basis for the range of $S$. An appeal to the classical duality representation  
\begin{equation} \label{xrlrGA}
    \inf_{\eta\in {V}}
    \|{T} \xi - \eta\| 
    =
    \sup_{( \eta,\, {V})=0}
    \frac{( {T} \xi, \eta)}{\|\eta\|} ,
\end{equation}
then gives
\begin{equation} \label{gp73Mt}
\begin{split}
    &
    d_N({T}B_1(H))
    =
    \inf_{\operatorname{\rm dim} {V} \leq N}
    \sup_{\|\xi\|\leq 1} 
    \sup_{( \eta,\, {V})=0}
    \frac{( {T} \xi, \eta)}{\|\eta\|} 
    = \\ &
    \inf_{\operatorname{\rm dim} {V} \leq N}
    \sup_{( \eta,\, {V})=0}
    \sup_{\|\xi\|\leq 1} 
    \frac{( \xi, {T}^* \eta)}{\|\eta\|} 
    = 
    \inf_{\operatorname{\rm dim} {V} \leq N}
    \sup_{( \eta,\, {V})=0}
    \frac{\| {T}^*\eta\|}{\|\eta\|} 
    = \\ &
    \left[
    \inf_{\operatorname{\rm dim} {V} \leq N}
    \sup_{( \eta,\, {V})=0}
    \frac{( {T}{T}^*\eta,\, \eta)}{( \eta, \eta)} 
    \right]^{1/2}
    =
    \mu_{N+1}^{1/2}({T}{T}^*) ,
\end{split}
\end{equation}
which is the Rayleigh-Ritz characterization of the $(N+1)$ eigenvalue of ${T}{T}^*$. 

We recall that the $s$-numbers, or {\sl singular values}, of ${T}$ \cite[Chapter IV]{Pinkus:1985}, first introduced by E.~Schmidt \cite{Schmidt:1907}, are the numbers
\begin{equation}
    s_k({T})
    =
    [\mu_k({T}^*{T})]^{1/2},
    \quad
    i = 1,\dots,\operatorname{\rm rank} {T}^*{T} ,
\end{equation}
where $(\mu_k({T}^*{T}))_{k=1}^N$ are the eigenvalues of ${T}^*{T}$, repeated according to their multiplicity and we order the eigenvalues so that
\begin{equation}
    \mu_1({T}^*{T}) \geq \mu_2({T}^*{T}) \geq \dots > 0 .
\end{equation}
It is readily verified that $s_k ({T}) = s_k ({T}^*)$. Denote by $(\phi_k)_{k=1}^N$ the orthonormal eigenvectors of ${T}^*{T}$ and set $\psi_k = {T} \, \phi_k$. It follows that $(\psi_k)_{k=1}^N$ are eigenvectors of ${T}{T}^*$, with the orthogonality property
\begin{equation}
    ( \psi_k ,\, \psi_l )
    =
    s_l^2({T}) \, \delta_{kl} .
\end{equation}
Then, the following theorem characterizes the sought optimal finite-rank approximation of a compact operator ${T}$. 

\begin{thm} {\rm (Adapted from \cite[Theorem \S IV 2.2]{Pinkus:1985})}. \label{Mj6553}
Let ${T} \in \mathcal{B}(H)$, compact. For $N=0,1,\dots,\operatorname{\rm rank} {T}^*{T}$, let $(s_k({T}))_{k=1}^N$, $(\phi_k)_{k=1}^N$ and $(\psi_k)_{k=1}^N$ be the singular values of ${T}$, orthonormal eigenvectors of ${T}^*{T}$ and eigenvectors of ${T}{T}^*$, respectively. Then:
\begin{itemize}
\item[a)] $d_N({T}B_1(H); H) = s_{N+1}({T})$.
\item[b)] The subspaces $H_N = \{\psi_1,\dots, \psi_N\}$ are optimal in the sense (\ref{9PJ7Wx}).
\item[c)] The linear operators  
\begin{equation} 
    {T}_N \eta = \sum_{k=1}^N ( \eta, \phi_k ) \, \psi_k ,
\end{equation} 
are optimal in the sense (\ref{Y4BfXS}).
\end{itemize}
\end{thm}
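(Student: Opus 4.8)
The plan is to build on the Rayleigh--Ritz identity (\ref{gp73Mt}), which already reduces the $N$-width of ${T}B_1(H)$ to an eigenvalue of ${T}{T}^*$, and then to supply the two missing ingredients: the identification of the spectra of ${T}^*{T}$ and ${T}{T}^*$, and the construction of the near-best subspace and operator. First I would dispatch part (a). By (\ref{gp73Mt}) we have $d_N({T}B_1(H);H) = \mu_{N+1}^{1/2}({T}{T}^*)$, so it suffices to show $\mu_{N+1}({T}{T}^*) = \mu_{N+1}({T}^*{T}) = s_{N+1}^2({T})$. For this I would invoke the standard intertwining: if ${T}^*{T}\phi = \mu\phi$ with $\mu>0$, then ${T}{T}^*({T}\phi) = {T}({T}^*{T}\phi) = \mu({T}\phi)$ and $\|{T}\phi\|^2 = ({T}^*{T}\phi,\phi) = \mu\|\phi\|^2 \neq 0$, so $\phi\mapsto{T}\phi$ carries eigenvectors of ${T}^*{T}$ for $\mu$ injectively to eigenvectors of ${T}{T}^*$ for $\mu$; the symmetric argument with ${T}^*$ shows the correspondence is a multiplicity-preserving bijection of the nonzero spectra. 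Ordering eigenvalues decreasingly yields $\mu_k({T}{T}^*) = \mu_k({T}^*{T})$ for every $k$, hence (a) by the definition of the $s$-numbers.

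Next I would record the spectral decomposition underpinning (b) and (c). Since ${T}$ is compact, ${T}^*{T}$ is compact, self-adjoint and non-negative; by the spectral theorem its eigenvectors $(\phi_k)$ for the nonzero eigenvalues, together with $\ker({T}^*{T})$, form an orthogonal decomposition of $H$, and $\ker({T}^*{T}) = \ker({T})$ because $({T}^*{T}\zeta,\zeta) = \|{T}\zeta\|^2$. Consequently, for any $\zeta\in H$, ${T}\zeta = \sum_k(\zeta,\phi_k)\,\psi_k$ with $\psi_k = {T}\phi_k$, the $\psi_k$ being mutually orthogonal with $\|\psi_k\|^2 = s_k^2({T})$; since $s_k\to 0$ in the infinite-rank case a Bessel-type estimate shows the partial sums are Cauchy, so the series converges in norm.

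With this in hand, part (c) follows by computing $({T}-{T}_N)\zeta = \sum_{k>N}(\zeta,\phi_k)\psi_k$, whence $\|({T}-{T}_N)\zeta\|^2 = \sum_{k>N}|(\zeta,\phi_k)|^2 s_k^2 \le s_{N+1}^2\|\zeta\|^2$ and $\|{T}-{T}_N\|\le s_{N+1}$; the reverse bound $\|{T}-{T}_N\| \ge \inf_{\operatorname{\rm rank}S\le N}\|{T}-S\| = d_N({T}B_1(H);H) = s_{N+1}$ is immediate from $\operatorname{\rm rank}{T}_N\le N$ and part (a), giving (\ref{Y4BfXS}). For part (b), the orthogonality of the $\psi_k$ makes ${T}_N\zeta$ precisely the orthogonal projection of ${T}\zeta$ onto $H_N = \operatorname{span}\{\psi_1,\dots,\psi_N\}$, so $\inf_{\eta\in H_N}\|{T}\zeta-\eta\| = \|({T}-{T}_N)\zeta\| \le s_{N+1}$; taking the supremum over the unit ball and comparing with (a) forces equality in (\ref{9PJ7Wx}).

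I expect the only genuinely delicate point to be the infinite-rank bookkeeping: justifying completeness of the eigenvectors of ${T}^*{T}$ together with $\ker{T}$ in $H$, norm-convergence of the expansion ${T}\zeta = \sum_k(\zeta,\phi_k)\psi_k$, and the passage from the pointwise estimates to the operator-norm statements. All of this is classical compact/Hilbert--Schmidt operator theory and is in essence \cite[Theorem~\S IV 2.2]{Pinkus:1985}, so the write-up is mostly a matter of assembling pieces already available in the excerpt---chiefly the chain (\ref{gp73Mt}) and the duality (\ref{xrlrGA})---around the spectral decomposition above.
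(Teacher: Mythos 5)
Your proposal is correct, and it follows exactly the route the paper sets up: the appendix's Rayleigh--Ritz chain (\ref{gp73Mt}) together with the duality (\ref{xrlrGA}) gives part (a) once the nonzero spectra of ${T}^*{T}$ and ${T}{T}^*$ are identified (your intertwining argument, which the paper leaves implicit in the remark $s_k({T})=s_k({T}^*)$), and parts (b) and (c) are the standard Schmidt truncation of the singular-value expansion. The paper itself supplies no proof beyond the citation to Pinkus, and your reconstruction, including the operator-norm estimate $\|{T}-{T}_N\|\le s_{N+1}$ and the matching lower bound via rank, is the argument given there.
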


\end{appendix}

\bibliographystyle{unsrt}

\end{document}